  \setlist[itemize]{noitemsep,topsep=0pt}
  \newcolumntype{d}[1]{D{.}{.}{#1} }
\titlerunning{Derived-term Automata for Extended Weighted Rational Expressions}
\title{Derived-term Automata for\newline
  Extended Weighted Rational Expressions}
\author{Akim Demaille {\scriptsize (\SvnDate \SvnRev)}}
\affil{%
 LRDE, EPITA, \texttt{akim@lrde.epita.fr}%
}
\begin{document}
\maketitle

\ifthenelse{\boolean{long}}{\tableofcontents}

\begin{abstract}
  We present an algorithm to build an automaton from a rational expression.
  This approach introduces support for extended weighted expressions.
  Inspired by derived-term based algorithms, its core relies on a different
  construct, \emph{rational expansions}.  We introduce an inductive
  algorithm to compute the expansion of an expression from which the
  automaton follows.  This algorithm is independent of the size of the
  alphabet, and actually even supports infinite alphabets.  It can easily be
  accommodated to generate deterministic (weighted) automata.  These
  constructs are implemented in \vcsn, a free-software platform dedicated to
  weighted automata and rational expressions.
\end{abstract}

\section{Introduction}

Foundational to Automata Theory, the Kleene Theorem (and its weighted
extension, the Kleene--Sch\"utzenberger Theorem) states the equivalence of
\dfn{recognizability} ---accepted by an automaton--- and \dfn{rationality}
---defined by a \dfn{rational}, or \dfn{regular}, expression.  Numerous
constructive proofs (read \emph{algorithms}) have been proposed to go from
rational expressions to automata, and vice versa.  This paper focuses on
building an automaton from an expression.

In 1961 \citet{glushkov.61.rms} provides an algorithm to build a
nondeterministic automaton (without spontaneous transitions) now often
called the standard (or position, or Glushkov) automaton.
Earlier (1960), \citet{mcnaughton.60.itec} proposed the same construct for
\emph{extended} rational expressions (i.e., including intersection and
complement operators), but performed the now usual subset-automaton
construction on-the-fly, thus yielding a deterministic automaton.
A key ingredient of these algorithms is that they build an automaton whose
states represent positions in the rational expression, and computations on
these automata actually represent ``executions'' of the rational expression.

\ifthenelse{\boolean{long}}{Similarly, in}{In} 1964
\citet{brzozowski.64.jacm} shows that \emph{extended} expressions can be
used directly as acceptors: transitions are ``performed'' by computing the
left-quotient of the current expression by the current letter.  With a
proper equivalence relation between expressions (namely ACI: associativity,
commutativity, and idempotence of the addition), Brzozowski shows that there
is a finite number of equivalence classes of such quotients, called
\dfn{derivatives}.  This leads to a very natural construction of a
\emph{deterministic} automaton whose states are these derivatives.  A rather
discreet sentence (last line of p.~484) introduces the concept of
``expansion'', which is not further developed.

In 1996 \citet{antimirov.1996.tcs} introduces a novel idea: do not apply ACI
equivalence globally; rather, when computing the derivative of an expression
which is a sum, split it in a set of ``partial derivatives'' (or ``derived
terms'') --- which amounts to limiting ACI to the sums that are at the root
of the expression.  A key feature of the built automaton is that it is
non-deterministic; as a result the worst-case size of resulting automaton is
linear in the size of the expression, instead of exponential with
Brzozowski's construct.  Antimirov also suggests \emph{not} to rely on
derivation in implementations, but on so called ``linear forms'', which are
closely related to Brzozowski's expansions; derivation is used only to prove
correctness.

In 2005 \citet{lombardy.2005.tcs} generalize the computation of the
derivation and derived-term automaton to support weights.  Since, as is
well-known, not all weighted non deterministic automata can be determinized,
their construct relies on a generalization of Antimirov's derived-term that
generates a \emph{non-deterministic} automaton.  In their formalization,
Antimirov's sets of derived terms naturally turn into \emph{weighted} sets
---each term is associated with a weight--- that they name
\emph{polynomials} (of expressions).  However, linear forms completely
disappear, and the construction of the derived-term automaton relies on
derivatives.  Independently, and with completely different foundations,
\citet{rutten.1999.icalp, rutten.2003.tcs} proposes a similar construction.

In 2011, \citet{caron.2011.lata.2} complete Antimirov's construct to support
extended expressions.  This is at the price of a new definition of
derivatives: sets of sets of expressions, interpreted as disjunctions of
conjunctions of expressions.

\smallskip

The contributions of this paper are threefold.  Firstly, we introduce
``expansions'', which generalize Brzozowski's expansions and Antimirov's
linear forms to support weighted expressions; they bind together the
derivatives, the constant terms and the ``firsts'' of an expression.  They
make the computation of the derived-term automaton independent of the size
of the alphabet, and actually completely eliminate the need for the alphabet
to be finite.  Secondly, we provide support for extended weighted rational
expressions, which generalizes both \citet{lombardy.2005.tcs} and
\citet{caron.2011.lata.2}. And thirdly, we introduce a variation of this
algorithm to build \emph{deterministic} (weighted) automata.

We first settle the notations in \cref{sec:notations}, provide an algorithm
to compute the expansion of an expression in \cref{sec:expr-to-expa}, which
is used in \cref{sec:expaton} to propose an alternative construction of the
derived-term automaton.  In \cref{sec:related} we expose related work and
conclude in \cref{sec:conc}.

\smallskip

Interested readers may experiment with the concepts introduced here using
\vcsn.  \vcsn is a free-software platform dedicated to weighted automata and
rational expressions \citep{demaille.13.ciaa}.  It supports both derivations
and expansions, as exposed in this paper, and the corresponding
constructions of the derived-term automaton\footnote{\label{foot:url}See the
  interactive
  environment, \url{http://vcsn-sandbox.lrde.epita.fr}, or its documentation,\\
  \url{http://vcsn.lrde.epita.fr/dload/2.2/notebooks/expression.derived_term.html}.}.

\section{Notations}
\label{sec:notations}

\newcommand{\OB}[2]{\overbrace{#2}^{\text{#1}}}
\newcommand{\POB}[2]{\OB{\makebox[0pt]{#1}}{\vphantom{\bra{2}}#2}}
\newcommand{\UB}[2]{\underbrace{#2}_{\text{#1}}}
\newcommand{\PUB}[2]{\UB{\makebox[0pt]{#1}}{\vphantom{\bra{2}}#2}}

Our purpose is to define, compute, and use \emph{rational expansions}.  They
intend to be to the differentiation (derivation) of rational expressions
what differential forms are to the differentiation of functions.  Defining
expansions requires several concepts, defined bottom-up in this section.
The following figure should help understanding these different entities, how
they relate to each other, and where we are heading to: given a weighted
rational expression $\Ed_1 = \EdOne$ (weights are written in angle
brackets), compute its expansion:
\vspace{-.8\baselineskip}
\begin{align*}
\UB{Expansion (\cref{sec:expa})}
{
  \PUB{Constant term}
  {
    \vphantom{\PUB{Dummy}{\POB{Weight}{\biggl[bra{2}\biggr]}}}
    \POB{Weight}{\bra{5}}
  }
  \;
  \oplus
  \PUB{Proper part of the expansion}
  {
    \UB{First}{
      \vphantom{\biggl[\biggr]}\POB{Letter}{a}
    }
    \odot
    \biggl[
      \Lmul{2}{\UB{Derived term}{\vphantom{\biggl[\biggr]}\POB{Expression (\cref{sec:expr})}{ce}}}
      \;\oplus\;\;\;\;
      \POB{Monomial}{
        \Lmul{4}{de}
        }
    \biggr]
    \;\;\;\oplus\;\;\;
    b \odot
    \biggl[
      \OB{Polynomial (\cref{sec:poly})}
      {
        \Lmul{6}{\vphantom{\biggl[\biggr]}ce}
        \;\oplus\;
        \Lmul{3}{de}
      }
    \biggr]
  }
}
\end{align*}
It is helpful to think of expansions as a normal form for expressions.

\subsection{Rational Series}

Series are to weighted automata what languages are to Boolean automata.  Not
all languages are rational (denoted by an expression), and similarly, not
all series are rational (denoted by a weighted expression).  We follow
\citet{sakarovitch.09.eat}.

Let $A$ be a (finite) alphabet, and $\bra{\K, +, \cdot, \zeK, \unK}$ a
semiring whose (possibly non commutative) multiplication will be denoted by
implicit concatenation.  A (formal power) \dfn{series} over $A^*$ with
\dfn{weights} (or \dfn{multiplicities}) in $\K$ is any map from $A^*$ to
$\K$.  The weight of a word $m$ in a series $s$ is denoted $s(m)$.  The
\dfn{support} of a series $s$ is the language of words that have a non-zero
weight in $s$.  The \dfn{empty} series, $m \mapsto \zeK$, is denoted $0$;
for any word $u$ (including $\eword$), $u$ denotes the series
$m \mapsto \unK \text{ if $m = u$}, \zeK \text{ otherwise}$.  Equipped with
the pointwise addition ($s + t \coloneqq m \mapsto s(m) + t(m)$) and the
Cauchy product
($s \cdot t \coloneqq m \mapsto \sum_{u, v\in A^* \mid u \cdot v = m} s(u)
\cdot t(v)$)
as multiplication, the set of these series forms a semiring denoted
$\bra{\SRka, +, \cdot, 0, \eword}$.

The \dfn{constant term} of a series $s$, denoted $s_\eword$, is $s(\eword)$,
the weight of the empty word.  A series $s$ is \dfn{proper} if
$s_\eword = \zeK$. The \dfn{proper part} of $s$, denoted $s_p$, is the
proper series which coincides with $s$ on non empty words:
$s = s_\eword + s_p$.

The \dfn{star} of a series is an infinite sum: $s^* \coloneqq \sum_{n\in\N}s^n$.
To ensure semantic soundness, we suppose that $\K$ is a
\dfn{topological semiring}, i.e., it is equipped with a topology, and both
addition and multiplication are continuous.  Besides, it is supposed to be
\dfn{strong}, i.e., the product of two summable families is summable.  This
ensures that $\SRka$, equipped with the product topology derived from the
topology on $\K$, is also a strong topological semiring.

\begin{Proposition}
  \label{prop:dev}
  Let $\K$ be a strong topological semiring. Let $s \in \SRka$,
  $s^*$ is defined iff $s_\eword^*$ is defined and then
  $s^* = s_\eword^* + s_\eword^*s_ps^*$.
\end{Proposition}
\begin{proof}
  By \citep[Prop.~2.6, p.~396]{sakarovitch.09.eat} $s^*$ is defined iff
  $s_\eword^*$ is defined and then
  $s^* = (s_\eword^*s_p)^*s_{\eword}^* = s_\eword^*(s_ps_\eword^*)^*$.  The
  result then follows directly from $s^* = \eword + ss^*$:
  $s^* = s_\eword^*(s_ps_\eword^*)^* = s_\eword^*(\eword +
  (s_ps_\eword^*)(s_ps_\eword^*)^*) = s_\eword^* +
  s_\eword^*s_p(s_\eword^*(s_ps_\eword^*)^*) = s_\eword^* +
  s_\eword^*s_ps^*$.
\end{proof}

Rational languages are closed under intersection.  Series support a natural
generalization of intersection, the Hadamard product, which we name
\dfn{conjunction} and denote $\&$.  The conjunction of series $s$ and $t$ is
defined as $s \AND t \coloneqq m \mapsto s(m) \cdot t(m)$.

Rational languages are also closed under complement, but generalizing this
concept to series is more debatable.  In the sequel, we will rely on the
following definition: ``$s^c$ is the characteristic series of the complement
of the support of $s$.''  More precisely, $s^c(m) \coloneqq s(m)^c$ where
$\forall k \in \K, k^c \coloneqq \unK$ if $k = \zeK$, $\zeK$ otherwise.

\begin{Proposition}
  \label{prop:series}
  For series $s, s', t, t', s_a, t_a \in \SRka$ with $a \in A$, for
  $S, T\subseteq A$, and weights $k, h, s_{\eword}, t_{\eword} \in \K$:
  \begin{gather}
    \label{eq:series:and:distrib}
    (s+s') \AND t = s\AND t + s'\AND t
    \e
    s \AND (t+t') = s\AND t + s\AND t'
    \e
    (ks) \AND (ht) = (kh) (s\AND t)
    \\
    \label{eq:series:and:zip}
    \parenBig{s_{\eword} + \sum_{\mathclap{a\in S}} a\cdot s_a}
    \AND
    \parenBig{t_{\eword} + \sum_{\mathclap{a\in T}} a\cdot t_a}
    = s_{\eword}t_{\eword} + \sum_{\mathclap{a\in S \cap T}} a\cdot(s_a \AND t_a)
    \\
    \label{eq:series:compl}
    \parenBig{s_{\eword} + \sum_{\mathclap{a\in S}} a\cdot s_a}^{c}
    = s_{\eword}^{c} + \sum_{\mathclap{a\in S}} a\cdot s_a^{c} + \sum_{\mathclap{a\in A\setminus S}} a\cdot 0^{c}
  \end{gather}
\end{Proposition}

\subsection{Extended Weighted Rational Expressions}
\label{sec:expr}
\begin{Definition}[Extended Weighted Rational Expression]
  A \dfn{rational (or regular) expression} $\Ed$ is a term built from the
  following grammar, where $a \in A$ is a letter, and $k \in \K$ a weight:
  \begin{math}
    \Ed \Coloneqq \zed
          \mid \und
          \mid a
          \mid \Ed + \Ed
          \mid \lmul{k}{\Ed}
          \mid \rmul{\Ed}{k}
          \mid \Ed \cdot \Ed
          \mid \Ed^*
          \mid \Ed \AND \Ed
          \mid \Ed^c
  \end{math}.
\end{Definition}

Since the product of $\K$ does not need to be commutative there are two
exterior products: $\lmul{k}{\Ed}$ and $\rmul{\Ed}{k}$.
The \dfn{size} (aka \dfn{length}) of an expression $\Ed$, $\length{\Ed}$, is
its number of symbols, excluding parentheses; its \dfn{width} (aka
\dfn{literal length}), $\width{\Ed}$, is the number of occurrences of
letters.

Rational expressions are syntactic objects; they provide a finite notations
for (some) series, which are semantic objects.
\begin{Definition}[Series Denoted by an Expression]
  Let $\Ed$ be an expression.  The series denoted by $\Ed$, noted
  $\sem{\Ed}$, is defined by induction on $\Ed$:
  \begin{gather*}
    \sem{\zed} \coloneqq 0 \qquad
    \sem{\und} \coloneqq \eword    \qquad
    \sem{a}    \coloneqq a \qquad
    \sem{\Ed+\Fd}       \coloneqq \sem{\Ed} + \sem{\Fd} \qquad
    \sem{\lmul{k}{\Ed}} \coloneqq {k}{\sem{\Ed}}   \\
    \sem{\rmul{\Ed}{k}} \coloneqq {\sem{\Ed}}{k}    \quad
    \sem{\Ed \cdot \Fd} \coloneqq \sem{\Ed} \cdot \sem{\Fd} \quad
    \sem{\Ed^*} \coloneqq \sem{\Ed}^* \quad
    \sem{\Ed \AND \Fd} \coloneqq \sem{\Ed} \AND \sem{\Fd}  \quad
    \sem{\Ed^c} \coloneqq \sem{\Ed}^c
  \end{gather*}
\end{Definition}
An expression is \dfn{valid} if it denotes a series.  More specifically,
this requires that $\sem{\Fd}^*$ is well defined for each subexpression of
the form $\Fd^*$, i.e., that the constant term of $\sem{\Fd}$ is
\emph{starrable} in $\K$ (\cref{prop:dev}).  This definition, which involves
series (semantics) to define a property of expressions (syntax), will be
made effective (syntactic) with the appropriate definition of the constant
term $c(\Ed)$ \emph{of an expression $\Ed$} (\cref{def:ctder}).

\begin{longenv}
\begin{Example}[{\citep[Example~1]{lombardy.2005.tcs}}]
  \label{ex:e2}
  Expressions $\Fd_2 \coloneqq \FdTwo, \Ed_2 = \Fd_2^*$ have weights in
  $\Q$.  $\Fd_2$ is valid: its stars are on expressions that denote proper
  series.  $\Ed_2$ is valid, as the constant term of $\sem{\Fd_2}$ is
  $\tfrac{1}{6} + \tfrac{1}{3} = \tfrac{1}{2}$, whose star is defined:
  2. $\length{\Ed_2} = 8, \width{\Ed_2} = 2$.
\end{Example}
\end{longenv}

Two expressions $\Ed$ and $\Fd$ are \dfn{equivalent} iff
$\sem{\Ed} = \sem{\Fd}$.  Some expressions are ``trivially equivalent''; any
candidate expression will be rewritten via the following \dfn{trivial
  identities}.  Any subexpression of a form listed to the left of a
`$\Rightarrow$' is rewritten as indicated on the right.
\begin{gather*}
  \Ed+\zed  \Rightarrow \Ed
  \ee
  \zed+\Ed  \Rightarrow \Ed
  \\
  \begin{aligned}[t]
    \lmul{\zeK}{\Ed} & \Rightarrow \zed &
    \lmul{\unK}{\Ed} & \Rightarrow \Ed  &
    \lmul{k}{\zed}   & \Rightarrow \zed &
    \lmul{k}{\lmul{h}{\Ed}} &\Rightarrow \lmul{kh}{\Ed}
    \\
    \rmul{\Ed}{\zeK} & \Rightarrow \zed &
    \rmul{\Ed}{\unK} & \Rightarrow  \Ed &
    \rmul{\zed}{k}   & \Rightarrow \zed &
    \rmul{\rmul{\Ed}{k}}{h}  &\Rightarrow \rmul{\Ed}{kh}
  \end{aligned}\\
  \rmul{(\lmul{k}{\Ed})}{h} \Rightarrow \lmul{k}{(\rmul{\Ed}{h})} \ee
  \rmul{\ell}{k} \Rightarrow \lmul{k}{\ell}
  \\ 
  \Ed \cdot \zed  \Rightarrow \zed \ee
  \zed \cdot \Ed  \Rightarrow \zed
  \\
  (\lmulq{k}{\und}) \cdot \Ed   \Rightarrow  \lmul{k}{\Ed}
  \ee
  \Ed \cdot (\lmulq{k}{\und})   \Rightarrow  \rmul{\Ed}{k}
  \\ 
  \zed^\star \Rightarrow \und
  \\
  \Ed \AND \zed  \Rightarrow \zed  \ee \zed \AND \Ed  \Rightarrow \zed
  \ee
  \Ed \AND \zed^c  \Rightarrow \Ed \ee \zed^c \AND \Ed  \Rightarrow \Ed
  \\
  \lmulq{k}{\ell} \AND \lmulq{h}{\ell}  \Rightarrow  \lmul{kh}{\ell}
  \ee
  \lmulq{k}{\ell} \AND \lmulq{h}{\ell'}  \Rightarrow  \zed
  \\
  (\lmul{k}{\Ed})^c \Rightarrow \Ed^c \ee
  (\rmul{\Ed}{k})^c \Rightarrow \Ed^c
\end{gather*}
where $\Ed$ stands for a rational expression, $a \in A$~is a letter,
$\ell, \ell' \in A \cup \{\und\}$ denote two different labels, $k, h\in \K$
are weights, and $\lmulq{k}{\ell}$ denotes either $\lmul{k}{\ell}$, or
$\ell$ in which case $k = \unK$ in the right-hand side of $\Rightarrow$.
The choice of these identities is beyond the scope of this paper (see
\cite{sakarovitch.09.eat}), however note that, with the exception of the
last line, they are limited to trivial properties; in particular
\dfn{linearity} (``weighted ACI'': associativity, commutativity, and
$\lmul{k}{\Ed} + \lmul{h}{\Ed} \Rightarrow \lmul{k+h}{\Ed}$) is not
enforced.  In practice, additional identities help reducing the number of
derived terms \citep{owens.2009.jfp}, hence the final automaton size.  The
last two rules, about complement, will be discussed in \cref{sec:compl};
they are disabled when $\K$ has zero divisors.

\begin{Example}
  \label{ex:ab}
  Conjunction and complement can be combined to define new operators which
  are convenient syntactic sugar.  For instance,
  $\Ed \lplus \Fd \coloneqq \Ed + (\Ed^c \AND \Fd)$ allows to define a
  left-biased $+$ operator: $\sem{\Ed \lplus \Fd}(u) = \sem{\Ed}(u)$ if
  $\sem{\Ed}(u) \neq \zeK$, $\sem{\Fd}(u)$ otherwise.  The following example
  mocks Lex-like scanners: identifiers are non-empty sequences of letters of
  $\{a, b\}$ that are not reserved keywords.  The expression
  $\Ed_3 \coloneqq \lmul{2}{ab} \lplus \lmul{3}{(a+b)^+}$, with weights in
  $\mathbb{Z}$, maps the ``keyword'' $ab$ to 2, and ``identifiers'' to 3.
  Once desugared and simplified by the trivial identities, we have
  $\Ed_3 = \lmul{2}{ab} + ((ab)^{c} \AND \lmul{3}{((a+b)(a+b)^*))}$.
\end{Example}

\subsection{Rational Polynomials}
\label{sec:poly}

At the core of the idea of ``partial derivatives'' introduced by
\citet{antimirov.1996.tcs}, is that of \emph{sets} of rational expressions,
later generalized in \emph{weighted sets} by \citet{lombardy.2005.tcs},
i.e., functions (partial, with finite domain) from the set of rational
expressions into $\K \setminus \{\zeK\}$.  It proves useful to view such
structures as ``polynomials of rational expressions''.  In essence, they
capture the linearity of addition.

\begin{Definition}[Rational Polynomial]
  A \dfn{polynomial} (of rational expressions) is a finite (left) linear
  combination of rational expressions.  Syntactically it is represented by a
  term built from the grammar
  \begin{math}
    \Pd \Coloneqq 0
    \mid \Lmul{k_1}{\Ed_1} \oplus \cdots \oplus \Lmul{k_n}{\Ed_n}
  \end{math}
  where $k_i\in \K\setminus \{\zeK\}$ denote \emph{non-null} weights, and
  $\Ed_i$ denote \emph{non-null} expressions.  Expressions may not appear
  more than once in a polynomial.  A \dfn{monomial} is a pair
  $\bra{k_i} \odot \Ed_i$.
\end{Definition}

We use specific symbols ($\odot$ and $\oplus$) to clearly separate the outer
polynomial layer from the inner expression layer.
A polynomial $\Pd$ of rational expressions can be ``projected'' as a
rational expression $\expr{\Pd}$ by mapping its sum and left-multiplication
by a weight onto the corresponding operators on rational expressions.  This
operation is performed on a canonical form of the polynomial (expressions
are sorted in a well defined order).  Polynomials denote series:
$\sem{\Pd} \coloneqq \sem{\expr{\Pd}}$.

\begin{Example}\label{ex:e1}
  Let $\Ed_1 \coloneqq \EdOne$.  Polynomial
  `$\Pd_{1a} \coloneqq \Lmul{2}{ce} \oplus \Lmul{4}{de}$' has two monomials:
  `$\Lmul{2}{ce}$' and `$\Lmul{4}{de}$'.  It denotes the (left) quotient of
  $\sem{\Ed_1}$ by $a$, and
  `$\Pd_{1b} \coloneqq \Lmul{6}{ce} \oplus \Lmul{3}{de}$' the quotient by
  $b$.
\end{Example}

Let $\Pd = \bra{k_1} \odot \Ed_1 \oplus \cdots \oplus \bra{k_n} \odot \Ed_n$
be a polynomial, $k$ a weight (possibly null) and $\Fd$ an expression
(possibly null), we introduce the following operations:
\begin{gather}
  \notag
  \Pd\cdot\Fd
  \coloneqq
    \bra{k_1} \odot (\Ed_1\cdot\Fd) \oplus \cdots \oplus \bra{k_n} \odot (\Ed_n\cdot\Fd)
  \\
  \notag
  \lmul{k}{\Pd}
   \coloneqq
    \bra{kk_1} \odot \Ed_1 \oplus \cdots \oplus \bra{kk_n} \odot \Ed_n
  \ee
  \rmul{\Pd}{k}
  \coloneqq
  \bra{k_1} \odot (\rmul{\Ed_1}{k}) \oplus \cdots \oplus \bra{k_n} \odot (\rmul{\Ed_n}{k})
  \\\label{eq:poly:ops:andcompl}
  \Pd_1 \AND \Pd_2
  \coloneqq
  \bigoplus_{{\substack{\Lmul{k_1}{\Ed_1} \in \Pd_1\\\Lmul{k_2}{\Ed_2} \in \Pd_2}}} \Lmul{k_1k_2}{(\Ed_1 \AND \Ed_2)}
  \ee
  \Pd^{c}
  \coloneqq
  \Lmul{\unK}{\expr{\Pd}^{c}}
\end{gather}
Trivial identities might simplify the result, e.g.,
$(\Lmul{\unK}{a}) \AND (\Lmul{\unK}{b}) = \Lmul{\unK}{(a \AND b)} = 0$.

Note the asymmetry between left and right exterior products.  The addition
of polynomials is commutative, multiplication by zero (be it an expression
or a weight) evaluates to the null polynomial, and the left-multiplication
by a weight is distributive.


\begin{Lemma}
  \label{lem:poly:ops}%
  \begin{math}
    \sem{\Pd\cdot\Fd} = \sem{\Pd} \cdot \sem{\Fd}
    \ee
    \sem{\lmul{k}{\Pd}} = \lmul{k}{\sem{\Pd}}
    \ee
    \sem{\rmul{\Pd}{k}} = \rmul{\sem{\Pd}}{k}
  \end{math}
  \\
  \begin{math}
    \sem{\Pd_1 \AND \Pd_2} = \sem{\Pd_1} \AND \sem{\Pd_2}
    \ee
    \sem{\Pd^{c}} = \sem{\Pd}^{c}
  \end{math}.
\end{Lemma}
\begin{proof}
  The first three are trivial.  The case of $\AND$ follows from
  \cref{eq:series:and:distrib}. Complement follows from its definition:
  $\sem{\Pd^c} \coloneqq \sem{\expr{\Pd^c}} =
  \sem{\Lmul{\unK}{\expr{\Pd}^{c}}} = \sem{\expr{\Pd}^{c}} =
  \sem{\expr{\Pd}}^{c} = \sem{\Pd}^c$.
\end{proof}

\subsection{Rational Expansions}
\label{sec:expa}


\begin{Definition}[Rational Expansion]
  A \dfn{rational expansion} $\Xd$ is a term built from the grammar
  \begin{math}
    \Xd \Coloneqq \bra{k} \oplus a_1 \odot[\Pd_1] \oplus \cdots \oplus a_n \odot[\Pd_n]
  \end{math}
  where $k \in \K$ is a weight (possibly null), $a_i \in A$ letters
  (occurring at most once), and $\Pd_i$ non-null polynomials.  We name $k$
  the \dfn{constant term},
  $a_1 \odot[\Pd_1] \oplus \cdots \oplus a_n \odot[\Pd_n]$ the \dfn{proper
    part}, and $\{a_1, \ldots, a_n\}$ (possibly empty) the \dfn{firsts}.
\end{Definition}
To ease reading, polynomials are written in square brackets.  Contrary to
expressions and polynomials, there is no specific term for the empty
expansion: it is represented by $\bra{\zeK}$, the null weight.  Except for
this case, null constant terms are left implicit.  Besides their support for
weights, expansions differ from Antimirov's linear forms in that they
integrate the constant term, which gives them a flavor of series.  Given an
expansion $\Xd$, we denote by $\Xd_\eword$ (or $\Xd(\eword)$) its constant
term, by $f(\Xd)$ its firsts, by $\Xd_p$ its proper part, and by $\Xd_a$ (or
$\Xd(a)$) the polynomial corresponding to $a$ in $\Xd$.  Expansions will
thus be written:
\begin{math}
  \Xd = \bra{\Xd_{\eword}} \oplus \bigoplus_{a \in f(\Xd)} a \odot[\Xd_a]
\end{math}.


An expansion whose polynomials are monomials is said to be
\dfn{deterministic}.  An expansion $\Xd$ can be ``projected'' as a rational
expression $\expr{\Xd}$ by mapping weights, letters and polynomials to their
corresponding rational expressions, and $\oplus$/$\odot$ to the
sum/concatenation of rational expressions.  Again, this is performed on a
canonical form of the expansion: letters and polynomials are sorted.
Expansions also denote series: $\sem{\Xd} \coloneqq \sem{\expr{\Xd}}$.  An
expansion $\Xd$ is said to be \dfn{equivalent} to an expression $\Ed$ iff
$\sem{\Xd} = \sem{\Ed}$.

\begin{Example}[\cref{ex:e1} continued]
  \label{ex:e1:xpn}
  Expansion
  $\Xd_1 \coloneqq \bra{5} \oplus a \odot [\Pd_{1a}] \oplus b \odot
  [\Pd_{1b}]$
  has $\Xd_1(\eword) = \bra{5}$ as constant term, and maps the letter $a$
  (resp.\ $b$) to the polynomial $\Xd_1(a) = \Pd_{1a}$ (resp.\
  $\Xd_1(b) = \Pd_{1b}$).  $\Xd_1$ can be proved to be equivalent to
  $\Ed_1$.
\end{Example}

Let $\Xd, \Yd$ be expansions, $k$ a weight, and $\Ed$ an expression (all
possibly null):
\begin{gather}
  \label{eq:epn:plus:epn}
  \Xd \oplus \Yd
  \coloneqq
  \bra{\Xd_\eword+ \Yd_\eword}\oplus\bigoplus_{\mathclap{a \in f(\Xd) \cup f(\Yd)}} a \odot [\Xd_a \oplus \Yd_a]
  \\
  \lmul{k}{\Xd} \coloneqq \bra{k\Xd_\eword} \oplus \bigoplus_{\mathclap{a \in f(\Xd)}} a \odot [\lmul{k}{\Xd_a}]
  \ee
  \rmul{\Xd}{k} \coloneqq \bra{\Xd_\eword k} \oplus \bigoplus_{\mathclap{a \in f(\Xd)}} a \odot [\rmul{\Xd_a}{k}]
  \\
  \label{eq:epn:mul:epn}
  \Xd\cdot\Ed \coloneqq \bigoplus_{a \in f(\Xd)} a \odot [\Xd_a \cdot \Ed] \ee \text{with $\Xd$ proper: $\Xd_\eword = \zeK$}
  \\
  \label{eq:epn:and:epn}
  \Xd \AND \Yd \coloneqq \bra{\Xd_\eword\Yd_\eword}
    \oplus
    \bigoplus_{\mathclap{a \in f(\Xd) \cap f(\Yd)}} a \odot [\Xd_a \AND \Yd_a]
  \\
  \label{eq:epn:compl:epn}
  \Xd^{c} \coloneqq \bra{\Xd_\eword^{c}}
  \oplus \bigoplus_{\mathclap{a \in f(X)}} a \odot [\Xd_a^{c}] \oplus \bigoplus_{\mathclap{a \in A\setminus f(X)}} a \odot [\zed^{c}]
\end{gather}
Since by definition expansions never map to null polynomials, some firsts
might be smaller that suggested by these equations.  For instance in
$\mathbb{Z}$ the sum of $\bra{1} \oplus a \odot [\Lmul{1}{b}]$ and
$\bra{1} \oplus a \odot [\Lmul{-1}{b}]$ is $\bra{2}$, and
$\paren{a \odot[\Lmul{1}{b}]} \AND \paren{a \odot[\Lmul{1}{c}]}$ is
$\bra{0}$ since $b\AND c \Rightarrow \zed$.  Note that $\Xd^c$ is a
deterministic expansion.

The following lemma is simple to establish: lift semantic equivalences,
such as those of \cref{prop:series}, to syntax, using \cref{lem:poly:ops}.
\begin{Lemma}
  \label{lem:xpn:semantics}%
  \begin{math}
    \sem{\Xd \oplus \Yd} = \sem{\Xd} + \sem{\Yd} \ee
    \sem{\lmul{k}{\Xd}} = \lmul{k}{\sem{\Xd}} \ee
    \sem{\rmul{\Xd}{k}} = \rmul{\sem{\Xd}}{k}
  \end{math}
  \\
  \begin{math}
    \sem{\Xd\cdot\Ed} = \sem{\Xd}\cdot\sem{\Ed}
    \ee
    \sem{\Xd \AND \Yd} = \sem{\Xd} \AND \sem{\Yd}
    \ee
    \sem{\Xd^{c}} = \sem{\Xd}^{c}
  \end{math}.
\end{Lemma}

\subsection{Weighted Automata}

\begin{Definition}[Automaton]
  A \dfn{weighted automaton} $\Ac$ is a tuple $\bra{A, \K, Q, E, I, T}$
  where:
  \begin{itemize}
  \item $A$ (the set of labels) is an alphabet (usually finite),
    \ifthenelse{\boolean{long}}{\item}{and} $\K$ (the set of weights) is a
    semiring,
  \item $Q$ is a set of states,\ifthen{\boolean{long}}{\item} $I$ and $T$
    are the \dfn{initial} and \dfn{final} functions from $Q$ into $\K$,
  \item $E$ is a (partial) function from $Q \times A \times Q$ into
    $\K \setminus \{\zeK\}$;

    its domain represents the \dfn{transitions}:
    $(\mathit{source}, \mathit{label}, \mathit{destination})$.
  \end{itemize}
\end{Definition}
An automaton is \dfn{locally finite} if each state has a finite number of
outgoing transitions ($\forall s\in Q, \{s\} \times A \times Q \cap E$ is
finite).  A \dfn{finite automaton} has a finite number of states.
A \dfn{path} $p$ in an automaton is a sequence of transitions
$(q_0, a_0, q_1)(q_1, a_1, q_2)\cdots(q_n, a_n, q_{n+1})$ where the source
of each is the destination of the previous one; its \dfn{label} is the word
$a_0a_1\cdots a_n$, its \dfn{weight} is
$I(q_0) \otimes E(q_0, a_0, q_1) \otimes \cdots \otimes E(q_n, a_n, q_{n+1})
\otimes T(q_{n+1})$.
The \dfn{evaluation} of word $u$ by a locally finite automaton $\Ac$,
$\Ac(u)$, is the (finite) sum of the weights of all the paths labeled by
$u$, or $\zeK$ if there are no such path.  The \dfn{behavior} of such an
automaton $\Ac$ is the series $\sem{\Ac} \coloneqq u \mapsto \Ac(u)$.  A
state $q$ is \dfn{initial} if $I(q) \neq \zeK$.  A state $q$ is
\dfn{accessible} 
if there is a path from an initial state to $q$.  The \dfn{accessible} part
of an automaton $\Ac$ is the subautomaton whose states are the accessible
states of $\Ac$.  The size of a finite automaton, $\length{\Ac}$, is its
number of states.


We are interested, given an expression $\Ed$, by an algorithm to compute an
automaton $\Ac_\Ed$ such that $\sem{\Ac_\Ed} = \sem{\Ed}$
(\cref{sec:expaton}).  To this end, we first introduce a simple recursive
procedure to compute \emph{the} expansion of an expression.

\section{Computing Expansions of Expressions}
\label{sec:expr-to-expa}

\subsection{Expansion of a Rational Expression}
\label{sec:expa-of-expr}
\begin{Definition}[Expansion of a Rational Expression]
  \label{def:expa-of-expr}
  The \dfn{expansion of a rational expression} $\Ed$, written $d(\Ed)$, is
  the expansion defined inductively as follows:
  \begin{gather}
    \label{eq:epn:cst}
    d(\zed) \coloneqq \bra{\zeK} \ee
    d(\und) \coloneqq \bra{\unK} \ee
    d(a)    \coloneqq a \odot [\Lmul{\unK}{\und}]
    \\
    \label{eq:epn:add}
    d(\Ed+\Fd) \coloneqq d(\Ed) \oplus d(\Fd) \ee
    d(\lmul{k}{\Ed}) \coloneqq \lmul{k}{d(\Ed)} \ee
    d(\rmul{\Ed}{k}) \coloneqq \rmul{d(\Ed)}{k}
    \\
    \label{eq:epn:mul}
    d(\Ed \cdot \Fd)  \coloneqq \ep(\Ed)\cdot \Fd \oplus \lmul{\ec(\Ed)}{d(\Fd)}
    \\
    \label{eq:epn:star}
    d(\Ed^*) \coloneqq \bra{\ec(\Ed)^*} \oplus \lmul{\ec(\Ed)^*}{\ep(\Ed) \cdot \Ed^*}
    \\
    \label{eq:epn:and}
    d(\Ed \AND \Fd) \coloneqq d(\Ed) \AND d(\Fd)
    \\
    \label{eq:epn:compl}
    d(\Ed^{c}) \coloneqq d(\Ed)^{c}
  \end{gather}
  where $\ec(\Ed) \coloneqq d(\Ed)_\eword, \ep(\Ed) \coloneqq d(\Ed)_p$ are
  the constant term/proper part of $d(\Ed)$.
\end{Definition}

The right-hand sides are indeed expansions.  The computation trivially
terminates: induction is performed on strictly smaller subexpressions.
These formulas are enough to compute the expansion of an expression; there
is no secondary process for the firsts --- indeed
$d(a) \coloneqq a \odot [\Lmul{\unK}{\und}]$ suffices and every other case
simply propagates or assembles the firsts --- or the constant terms.  Of
course, in an implementation, a single recursive call to $d(\Ed)$ is
performed for \cref{eq:epn:mul,eq:epn:star}, from which $\ec(\Ed)$ and
$\ep(\Ed)$ are obtained.  So for instance \cref{eq:epn:star} should rather
be written:
\begin{math}
  d(\Ed^*)
  \coloneqq
  \mathtt{let}\;
  \Xd = d(\Ed)\;\mathtt{in}\;
  \bra{\Xd_\eword^*}
  \oplus \lmul{\Xd_\eword^*}{\Xd_p \cdot \Ed^*}
\end{math}.  Besides, existing expressions should be referenced to, not
duplicated: in the previous piece of code, $\Ed^*$ is not built again, the
input argument is reused.

\begin{Proposition}
  The expansion of a rational expression is equivalent to the expression.
\end{Proposition}
\begin{proof}
  We prove that $\sem{d(\Ed)} = \sem{\Ed}$ by induction on the expression.
  The equivalence is straightforward for
  \cref{eq:epn:cst,eq:epn:add}.
  The case of multiplication, \cref{eq:epn:mul}, follows from:
  \begin{align*}
    \sem{d(\Ed \cdot \Fd)}
    &= \sem{\ep(\Ed) \cdot \Fd \oplus \bra{\ec(\Ed)} \cdot d(\Fd)}
    = \sem{\ep(\Ed)}\cdot\sem{\Fd} + \bra{\ec(\Ed)} \cdot \sem{d(\Fd)} \\
    &= \sem{\ep(\Ed)}\cdot\sem{\Fd} + \bra{\ec(\Ed)} \cdot \sem{\Fd}
    = \paren{\sem{\bra*{\ec(\Ed)}}+\sem{\ep(\Ed)}} \cdot\sem{\Fd} \\
    &= \sem{\bra{\ec(\Ed)} + \ep(\Ed)} \cdot\sem{\Fd}
    = \sem{d(\Ed)} \cdot \sem{\Fd}
    = \sem{\Ed} \cdot \sem{\Fd}
    = \sem{\Ed\cdot \Fd}
  \end{align*}
  It might seem more natural to exchange the two terms (i.e.,
  $\bra{\ec(\Ed)} \cdot d(\Fd) \oplus \ep(\Ed)\cdot \Fd$), but an
  implementation first computes $d(\Ed)$ and then computes $d(\Fd)$
  \emph{only if} $\ec(\Ed) \ne \zeK$.
  The case of Kleene star, \cref{eq:epn:star}, follows from \cref{prop:dev}.
  The case of conjunction is straightforward:
\ifthen{\boolean{long}}{
  \begin{align*}
    \sem{d(\Ed \AND \Fd)}
    & = \sem{d(\Ed) \AND d(\Fd)}
    & \text{by definition, \cref{eq:epn:and}}
    \\
    & = \sem{d(\Ed)} \AND \sem{d(\Fd)}
    & \text{by \cref{lem:xpn:semantics}}
    \\
    & = \sem{\Ed} \AND \sem{\Fd}
    & \text{by induction hypothesis}
    \\
    & = \sem{\Ed \AND \Fd}
    & \text{by \cref{lem:xpn:semantics}}
    &  \hfill\qedhere
  \end{align*}
}{
  \begin{align*}
    \sem{d(\Ed \AND \Fd)}
    = \sem{d(\Ed) \AND d(\Fd)}
    = \sem{d(\Ed)} \AND \sem{d(\Fd)}
    = \sem{\Ed} \AND \sem{\Fd}
    = \sem{\Ed \AND \Fd}
    &\hfill\qedhere
  \end{align*}
}
\end{proof}

\subsection{Connection with Derivatives}

We reproduce here the definition of constant terms and derivatives from
Lombardy et al \citep[p.~148 and Def.~2]{lombardy.2005.tcs}, with our
notations and added support for extended expressions.

\begin{Definition}[Constant Term and Derivative]
  \label{def:ctder}
  \begin{align}
    \label{eq:der:cst}
    c(\zed) &\coloneqq \bra{\zeK}, \e
    c(\und) \coloneqq \bra{\unK},
    &
    \da{\zed} &\coloneqq \zed, \e \da{\und} \coloneqq \zed,
    \\
    \label{eq:der:label}
    c(a) &\coloneqq \bra{\zeK}, \forall a \in A, &
    \da{b} &\coloneqq
      \und \text{ if $b = a$, }
      \zed \text{ otherwise,}
    \displaybreak[0]
    \\
    \label{eq:der:add}
    c(\Ed+\Fd) &\coloneqq c(\Ed) + c(\Fd), &
    \da{(\Ed + \Fd)} &\coloneqq \da{\Ed} \oplus \da{\Fd},
    \displaybreak[0]
    \\
    \label{eq:der:lmul}
    c(\lmul{k}{\Ed}) &\coloneqq \lmul{k}{c(\Ed)}, &
    \da{(\lmul{k}{\Ed})} &\coloneqq \lmul{k}{\left(\da{\Ed}\right)},
    \displaybreak[0]
    \\
    \label{eq:der:rmul}
    c(\rmul{\Ed}{k}) &\coloneqq \rmul{c(\Ed)}{k}, &
    \da{(\rmul{\Ed}{k})} &\coloneqq \rmul{\left(\da{\Ed}\right)}{k},
    \displaybreak[0]
    \\
    \label{eq:der:mul}
    c(\Ed \cdot \Fd) &\coloneqq c(\Ed) \cdot c(\Fd), &
    \da{(\Ed \cdot \Fd)} &\coloneqq \left(\da{\Ed}\right)\cdot \Fd \oplus \lmul{c(\Ed)}{\da{\Fd}},
    \\
    \label{eq:der:star}
    c(\Ed^*) &\coloneqq c(\Ed)^*,&
    \da{\Ed^*} &\coloneqq \lmul{c(\Ed)^*}{\left(\da{\Ed}\right)\cdot \Ed^*}
    \\
    \label{eq:der:and}
    c(\Ed \AND \Fd)     & \coloneqq c(\Ed) \cdot c(\Fd), &
    \da{(\Ed \AND \Fd)} & \coloneqq \da{\Ed} \AND \da{\Fd},
    \\
    \label{eq:der:compl}
    c(\Ed^c) &\coloneqq c(\Ed)^c,&
    \da{\Ed^c} &\coloneqq \left(\da{\Ed}\right)^{c}
  \end{align}
  where \cref{eq:der:star} applies iff $c(\Ed)^*$ is defined in $\K$.
\end{Definition}

The reader is invited to compare \cref{def:expa-of-expr} and
\cref{def:ctder}, which does not even include the computation of the firsts.

\begin{Proposition}
  \label{prop:expa:der}
  For any rational expression $\Ed$, $d(\Ed)(\eword) = c(\Ed)$, and
  $d(\Ed)(a) = \da{\Ed}$.
\end{Proposition}

\begin{proof}
  A straightforward induction on $\Ed$.  The cases of constants and letters
  are immediate consequences of \cref{eq:der:cst,eq:der:label} on the one
  hand, and \cref{eq:epn:cst} on the other hand.
  \Cref{eq:epn:add,eq:der:add} both express straightforward ``linearity''.
  Multiplication (concatenation) is again barely a change of notation
  between \cref{eq:epn:mul} and \cref{eq:der:mul}, and likewise for the
  Kleene star (\cref{eq:epn:star,eq:der:star}).  Conjunction,
  \cref{eq:der:and}, follows from \cref{eq:epn:and,eq:epn:and:epn}, and
  complement, \cref{eq:der:compl}, from \cref{eq:epn:compl} and
  \cref{eq:epn:compl:epn}.
\end{proof}

\Cref{prop:expa:der} states that expansions, like Antimirov's linear forms,
offer a different means to compute the expression derivatives.  However
expansions seem to better capture the essence of the process, where the
computations of constant terms are tightly coupled with that of the
derivations.  The formulas are more concise.  Expansions are also ``more
complete'' than derivations, viz., the expansion of an expression can be
seen as a normal-form of this expression: $\Ed \equiv \expr{d(\Ed)}$ and
$d(\Ed)=d(\expr{d(\Ed)})$.  Expansions are more efficient to perform
effective calculations, such building an automaton (\cref{sec:perfs}), while
derivatives are used to prove the correctness (\cref{thm:expaton}).

\section{Expansion-Based Derived-Term Automaton}
\label{sec:expaton}
\vspace{-1ex}
\begin{Definition}[Derived-Term Automaton]
  \label{def:expaton}
  The \dfn{derived-term automaton} of an expression $\Ed$ is the accessible
  part of the automaton ${\Ac_\Ed} \coloneqq \bra{A, \K, Q, E, I, T}$
  defined as follows:
  \begin{itemize}
  \item $Q$ is the set of rational expressions on alphabet $A$ with weights
    in $\K$,
  \item
    $E(\Fd, a, \Fd') = k \text{ iff } a \in f(d(\Fd)) \;\mathrm{and}\;
    \lmul{k}{\Fd'} \in d(\Fd)(a)$,
  \item $I = \Ed \mapsto \unK$, $T(\Fd) = k$ iff $\bra{k} = d(\Fd)(\eword)$.
  \end{itemize}
\end{Definition}
The resulting automaton is locally finite, and not necessarily
deterministic\ifthen{\boolean{long}}{: given a state $\Fd$ and
  $a\in f(d(\Fd))$ one of its firsts, the ``destinations'' are all the
  expressions of $d(\Fd)(a)$}.

\begin{Example}[\cref{ex:e1,ex:e1:xpn} continued]
  \label{ex:e1:end}
\ifthenelse{\boolean{long}}{
  Given $d(\Ed_1)$, $\mathcal{A}_{\Ed_1}$ follows.
  \renewcommand{\UB}[2]{#2}
  \abovedisplayskip=\abovedisplayshortskip
  \begin{align*}
    d(\Ed_1)
    & = \Xd_1 = \bra{2}
      \oplus a \odot \UB{$\Pd_{1a}$}{\left[\Lmul{2}{ce} \oplus \Lmul{4}{de}\right]}
      \oplus b \odot \UB{$\Pd_{1b}$}{\left[\Lmul{6}{ce} \oplus \Lmul{3}{de}\right]}
  \end{align*}}
{We have $d(\Ed_1) = \Xd_1$. $\mathcal{A}_{\Ed_1}$ is:}

  \centerline{\includegraphics[scale=.8]{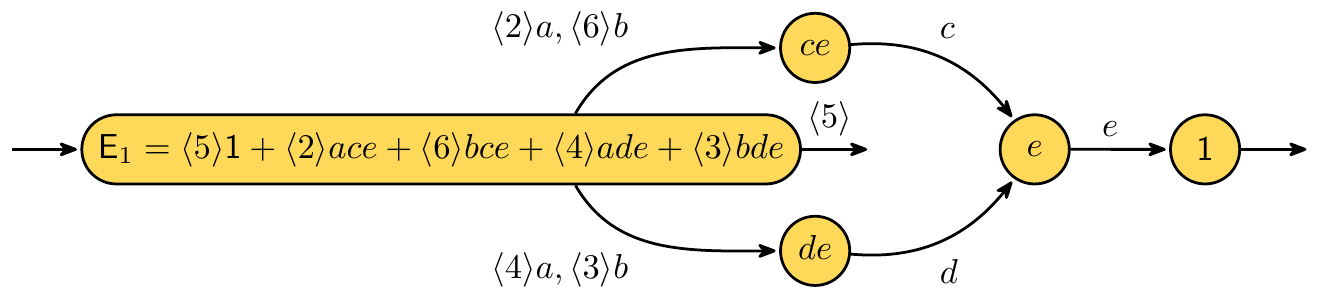}}
\end{Example}

It is straightforward to extract an algorithm from \cref{def:expaton}, using
a work-list of states whose outgoing transitions to compute.  This approach
admits a natural lazy implementation: the whole automaton is not computed at
once, but rather, states and transitions are computed on-the-fly, on demand,
for instance when evaluating a word.

\begin{theorem}
  \label{thm:expaton}%
  Any (valid) expression $\Ed$ and its expansion-based derived-term
  automaton $\Ac_\Ed$ denote the same series, i.e.,
  $\sem{\Ac_\Ed} = \sem{\Ed}$.
\end{theorem}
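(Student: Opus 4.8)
The plan is to show that the behavior of the derived-term automaton agrees with $\sem{\Ed}$ on every word $u$, by induction on the length of $u$, exploiting the fact (\cref{prop:expa:der}) that the expansion $d(\Fd)$ recovers exactly the constant term $c(\Fd)$ and the derivatives $\da{\Fd}$ at each state $\Fd$. Since the automaton is locally finite, $\Ac_\Ed(u)$ is a well-defined finite sum of path weights, so it suffices to relate these path weights to iterated derivatives. The key observation is that the transition structure of $\Ac_\Ed$ is, by \cref{def:expaton} and \cref{prop:expa:der}, precisely a ``derivation automaton'': from state $\Fd$ the letter $a$ leads, with the appropriate weights, to exactly the expressions appearing in the polynomial $\da{\Fd}$, and the final weight of $\Fd$ is its constant term $c(\Fd)$.

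First I would establish the bridge lemma connecting paths to derivatives. Writing $\Fd \deriv{u}$ for the iterated derivative of $\Fd$ by the word $u$ (a polynomial of expressions, extending \cref{def:ctder} letter by letter via the polynomial operations of \cref{sec:poly}), I would prove by induction on $|u|$ that the sum, over all states $\Fd'$, of (weight of all $u$-labeled paths from $\Ed$ to $\Fd'$) $\odot\,\Fd'$ equals the polynomial $\Ed \deriv{u}$. The base case $u = \eword$ is immediate, and the inductive step $u = va$ factors a path through its last transition: the weight accumulated reaching some intermediate $\Fd'$ times the transition weight $k$ with $\lmul{k}{\Fd''} \in d(\Fd')(a) = \da{\Fd'}$ is exactly how the polynomial derivative $\Ed\deriv{v}$ is differentiated by $a$, using the left-linearity and product formulas for polynomials recorded after \cref{ex:e1}.

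Next, with that lemma in hand, I would compute $\Ac_\Ed(u)$ directly. By definition the evaluation sums path weights multiplied by the final weight $T(\Fd') = c(\Fd')$; by the bridge lemma this sum collapses to $c(\Ed\deriv{u})$, i.e., the constant term of the $u$-derivative of $\Ed$. It then remains to invoke the fundamental property of weighted derivatives from \citet{lombardy.2005.tcs}, namely $\sem{\Ed}(u) = c(\Ed\deriv{u})$ (the constant term of the $u$-derivative returns the coefficient of $u$ in the denoted series); combined with the preceding equality this yields $\Ac_\Ed(u) = \sem{\Ed}(u)$ for every $u$, hence $\sem{\Ac_\Ed} = \sem{\Ed}$.

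\textbf{The main obstacle} I expect is the validity/accessibility bookkeeping rather than the algebra. One must check that every state reachable from $\Ed$ is itself a valid expression, so that its derivatives and constant term are well defined --- this relies on the trivial identities being closed under derivation and on \cref{eq:der:star} being applicable (the starred subexpressions keep starrable constant terms). A secondary subtlety is the complement/conjunction case: derivation by a word through $\da{\Ed^c} = (\da{\Ed})^c$ and the intersection formula \cref{eq:epn:and:epn} must be shown to commute correctly with the polynomial-level complement and conjunction, so that the bridge lemma's inductive step survives for extended operators. Once these well-definedness points are dispatched, the identification $\Ac_\Ed(u) = c(\Ed\deriv{u}) = \sem{\Ed}(u)$ follows, and \cref{prop:expa:der} guarantees that using $d(\cdot)$ in \cref{def:expaton} computes exactly the same transitions and final weights as the classical derivatives.
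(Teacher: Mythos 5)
Your proposal is correct in outline and follows the same two-pillar strategy as the paper's own proof (\cref{app:proof:expaton}): relate the automaton's evaluation of a word $u$ to the word derivative $\du{\Ed}$, then conclude via the syntax--semantics bridge $\sem{\Ed}(u) = c(\du{\Ed})$. Where you genuinely differ is the middle machinery. The paper first introduces the set of derived terms $D(\Ed)$ (\cref{def:dts}), proves it closed under letter derivation with explicit coefficients $k_{i,j}^{(a)}$ (\cref{lem:dt:closed:letter}), shows that every word derivative is a linear combination of derived terms (\cref{thm:k}), and only then identifies these coefficients with the matrix ($\K$-)representation of $\Ac_\Ed$, computing $\sem{\Ac_\Ed}(u) = I\cdot\zeta(u)\cdot T$. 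Your bridge lemma collapses all of this into a single induction on $\length{u}$ over path weights, and that induction is sound: the step needs only the transition structure of \cref{def:expaton}, the identification of transitions with derivatives from \cref{prop:expa:der}, and the linearity of polynomial derivation, while local finiteness keeps every sum finite. What you lose is the byproduct the paper gets for free: by locating all accessible states inside $D(\Ed)$, the paper simultaneously controls the state set and obtains finiteness of $\Ac_\Ed$ whenever $D(\Ed)$ is finite (no complement, or finite $\K$); your argument says nothing about the size of the automaton, only about its behavior. For the theorem as literally stated, that is acceptable.

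The one point you under-weight is the appeal to ``the fundamental property of weighted derivatives from \citet{lombardy.2005.tcs}'', namely $\sem{\Ed}(u) = c(\du{\Ed})$. For \emph{extended} expressions this is not available off the shelf: it holds in the cited work only for basic weighted expressions, and extending it to $\AND$ and $\Ed^c$ is precisely the bulk of the paper's appendix --- \cref{lem:der:poly} (derivation commutes with polynomial conjunction, projection, and complement), \cref{lem:der:word} (explicit word-derivative formulas for conjunction and complement), and the new cases of \cref{lem:sem:der:word}. You do flag the need for these commutation lemmas, but you attach them to the bridge lemma's inductive step, where they are not actually needed (that step uses only linearity of polynomial derivation); they are needed in the structural induction on $\Ed$ that proves the constant-term property itself. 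So your plan goes through, but what you call a ``secondary subtlety'' is in fact the main new content that would have to be written out for the proof to be complete.
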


The smallness of the derived-term automaton for basic operators
($\length{\Ac_\Ed} \le \width{\Ed} + 1$
\citep[Theorem~2]{lombardy.2005.tcs}) no longer applies with extended
operators.  Let $m$ and $n$ be coprime integers,
$\Ed \coloneqq (a^m)^*\AND (a^n)^*$ has width $\width{\Ed} = m+n$; it is
easy to see that $\length{\Ac_\Ed} = mn$.  It is also a classical result
that the minimal (trim) automaton to recognize the language of
$\Fd_n \coloneqq (a+b)^*a(a+b)^n$ has $2^{n+1}$ states; so
$\width{\Fd_n^{c}} = 2n+3$, but $\length{\Ac_{\Fd_n^{c}}} = 2^{n+1}+1$ (the
additional state is the sink state needed to get a \emph{complete}
deterministic automaton before complement).  Actually, when complement is
used on infinite semiring, it is not even guaranteed that the automaton is
finite (\cref{sec:compl}).

\begin{proof}[Sketch of proof of \cref{thm:expaton}, see \cref{app:proof:expaton}]
  This result is proved as \citep[Theorem~4]{lombardy.2005.tcs}: it requires
  several lemmas whose proofs are simple, but long.

  First define the derivation with respect to a word as the repetition of
  derivation with respect to a letter, and prove that
  $\sem{\derivative{u}{\Ed}} = u^{-1}\sem{\Ed}$.

  Second, prove that the set of derivatives of an expression $\Ed$ with
  respect to words is generated by $D(\Ed)$, a set of expressions, called
  \dfn{derived terms}.  The states of the derived-term automaton are not any
  expressions, they are derived terms (and $\Ed$ itself), so the finiteness
  of $D(\Ed)$ implies that of the automaton.

  $D(\Ed)$ admits a simple inductive computation \citep[Definition
  3]{lombardy.2005.tcs}, to which we add:
  \begin{align}
    \notag
    D(\Ed\AND \Fd)
    &\coloneqq \{\Ed_i \AND \Fd_j \mid \forall \Ed_i \in D(\Ed), \forall \Fd_j\in D(\Fd)\}
    \\
    \label{eq:d:compl}
    D(\Ed^c)
    &\coloneqq \{(\lmul{k_1}{\Ed_1} + \cdots + \lmul{k_n}{\Ed_n})^c \mid \forall k_1, \ldots, k_n \in \K, \forall \Ed_1, \ldots, \Ed_n \in D(\Ed)\}
  \end{align}
  If $\Ed$ features no complement, $D(\Ed)$ is trivially finite.  Equation
  \Cref{eq:d:compl} is related to \emph{determinized} expansions
  (\cref{sec:determ}): in essence it dubs (complements of) all potential
  derivatives of $\Ed$ into derived-terms (comparable to going from
  Antimirov's partial derivatives to Brzozowski's derivatives).  On infinite
  semirings, $D(\Ed^c)$ is infinite (more about this in \cref{sec:compl}).
  However, on finite semirings, such as $\mathbb{B}$, it is finite, albeit
  potentially large.

  Finally, prove that $\sem{\Ac_\Ed}(u) = \sem{\Ed}(u)$ for all words
  $u\in A^*$.
\end{proof}

\begin{longenv}
  \begin{Example}[\cref{ex:e2} continued]
    To compute the expansion of $\Ed_2$, one has:
    \newcommand{\dterm}[1]{\fcolorbox{black}{lip}{$#1$}}
    \begin{align*}
      d(\Fd_2)
      & = \bra*{\tfrac{1}{2}}
        \oplus a \odot \left[\Lmul{\tfrac{1}{6}}{a^*}\right]
        \oplus b \odot \left[\Lmul{\tfrac{1}{3}}{b^*}\right]
      \\
      d(\dterm{\Ed_2}) = d(\Fd_2^*)
      & = \bra{\ec(\Fd_2)^*} \oplus \lmul{\ec(\Fd_2)^*}{\ep(\Fd_2) \cdot \Fd_2^*}
      \\
      & = \bra{2}
        \oplus a \odot \left[\Lmul{\tfrac{1}{3}}{\dterm{a^* \, \Ed_2}}\right]
        \oplus b \odot \left[\Lmul{\tfrac{2}{3}}{\dterm{b^* \, \Ed_2}}\right]
    \end{align*}

    \label{ex:e2:aut}
    The derived terms of $\Ed_2$ are $\Ed_2, a^*\Ed_2,$ and $b^*\Ed_2$:
    \begin{align*}
      d(\dterm{a^*\Ed_2})
      & = \bra{2}
        \oplus a \odot \left[\Lmul{\tfrac{4}{3}}{\dterm{a^* \, \Ed_2}}\right]
        \oplus b \odot \left[\Lmul{\tfrac{2}{3}}{\dterm{b^* \, \Ed_2}}\right]
      \\
      d(\dterm{b^*\Ed_2})
      & = \bra{2}
        \oplus a \odot \left[\Lmul{\tfrac{1}{3}}{\dterm{a^* \, \Ed_2}}\right]
        \oplus b \odot \left[\Lmul{\tfrac{5}{3}}{\dterm{b^* \, \Ed_2}}\right]
    \end{align*}

    The derived-term automaton of $\Ed_2$ is therefore:

    \centerline{\includegraphics[scale=.8]{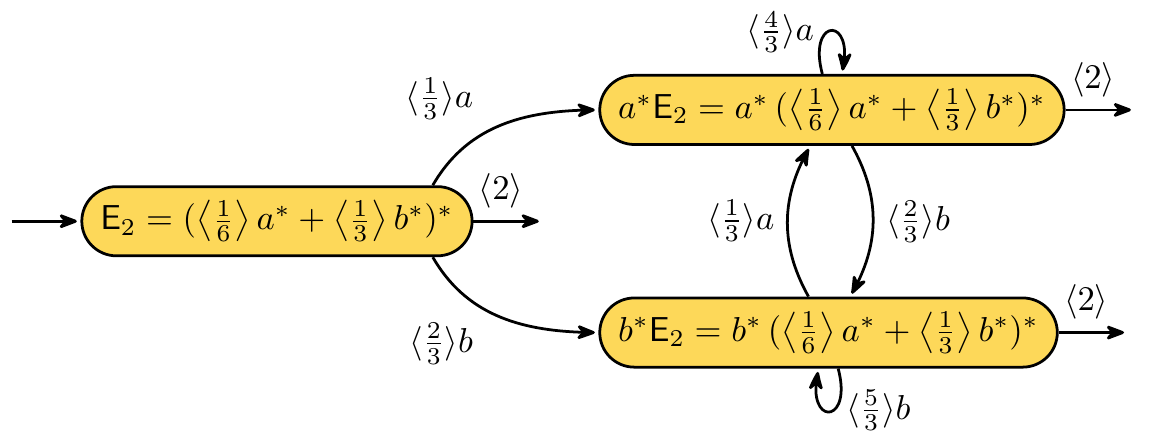}}
  \end{Example}
  \vspace{-10mm}
\end{longenv}

\subsection{Deterministic Automata}
\label{sec:determ}
The exposed approach can be used to generate \emph{deterministic} automata
by \emph{determinizing} the expansions:
$\mathsf{det}(\Xd) \coloneqq \bigoplus_{a\in f(\Xd)}
\Lmul{\unK}{\expr{\Xd_a}}$.
The $\mathsf{expr}$ operator ``consolidates'' a polynomial into an
expression that ensures this determinism.  For instance the expansion
$a\odot[\Lmul{\unK}{b}\oplus\Lmul{\unK}{c}]$, which would yield two
transitions labeled by $a$, one to $b$ and the other to $c$, is determinized
into $a\odot[\Lmul{\unK}{(b+c)}]$, yielding a single transition, to $b+c$.

It is well known that some nondeterministic \emph{weighted} automata have no
deterministic equivalent, in which case determinization loops.  Our
construct is subject to the same condition.  The expression
$\Ed \coloneqq a^*+(\lmul{2}{a})^*$ on the alphabet $\{a\}$ admits an
infinite number of derivatives:
$\derivative{a^n}(\Ed) = a^* \oplus \lmul{2^n}{(\lmul{2}{a})^*}$.  Therefore
our construction of \emph{deterministic} automata would not terminate: the
automaton is locally finite but infinite (and there is no finite
deterministic automaton equivalent to $\Ed$).  However, a lazy
implementation as available in \vcsn{}\cref{foot:url} would uncover the
automaton on demand, for instance when evaluating a word.

\smallskip
\centerline{\includegraphics[scale=.8]{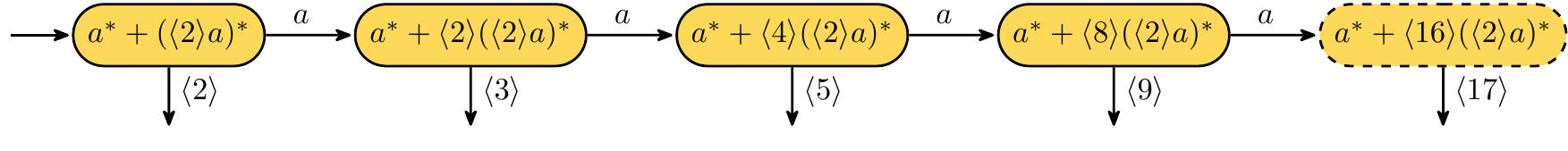}}

To improve determinizability, when $\K$ features a left-division, we apply
the usual technique used in weighted determinization implementations:
normalize the results to keep a unique representative of colinear
polynomials.  Concretely, when determinizing expansions, polynomials are
first normalized:
\begin{math}
  \mathsf{det}(\Xd) \coloneqq \bigoplus_{a\in f(\Xd)} \Lmul{\abs{\Xd_a}}{\expr{\abs{\Xd_a} \backslash \Xd_a}}
\end{math}
where, for a polynomial $\Pd = \bigoplus_{i\in I} \Lmul{k_i}{\Ed_i}$, and a
weight $k$,
$k \backslash \Pd \coloneqq \bigoplus_{i\in I} \Lmul{k \backslash
  k_i}{\Ed_i}$,
and the weight $\abs{\Pd}$ denotes some ``norm'' of (the coefficients of)
$\Pd$.  For instance $\abs{\Pd}$ can be the GCD of the $k_i$ (so that the
coefficients are coprime), or, in the case of a field, the first non null
$k_i$ (so that the first non null coefficient is $\unK$), or the sum of the
$k_i$ provided it's not null (so that the sum of the coefficients is
$\unK$), etc.

\begin{Example}[\cref{ex:e1,ex:e1:xpn,ex:e1:end} cont.]
  \label{ex:e1:det}
  The deterministic derived-term automaton of $\Ed_1$ using
  GCD-normalization is:
  \\[-1ex]
  \centerline{\includegraphics[scale=.8]{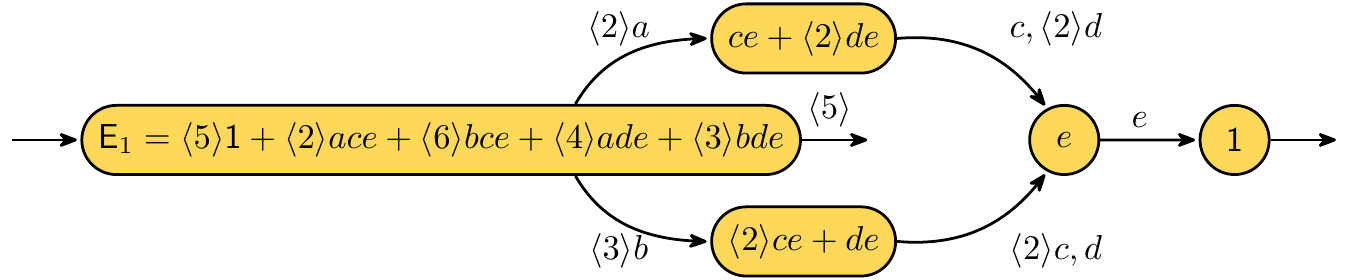}}
\end{Example}
\vspace{-10mm}

\subsection{The Case of Complement}
\label{sec:compl}

It is well known that to complement an (unweighted) automaton, it needs to
be deterministic and complete (which can lead to an exponential number of
states).  ``Local'' determinism (i.e., restricted to complemented
subexpressions) is ensured by $\mathsf{expr}$ in the definition of the
complement of an expansion in \cref{eq:poly:ops:andcompl,eq:epn:compl:epn}.

In the case of weighted expressions, we hit the same problems ---and apply
the same techniques--- as in \cref{sec:determ}: not all expressions generate
finite automata.  A strict (non-lazy) implementation would not terminate on
$\paren{a^*+(\lmul{2}{a})^*}^c$; a lazy implementation would uncover finite
portions of the automaton, on demand.  However, although
$\Fd \coloneqq (\lmul{2}{a})^*+(\lmul{4}{aa})^*$ admits an infinite number
of derivatives, $\Fd^c$ features only two:
$\paren{\lmul{2}{(\lmul{2}{a})^*} + \lmul{4}{(a(\lmul{4}{aa})^*)}}^c
\Rightarrow \paren{(\lmul{2}{a})^* + \lmul{2}{(a(\lmul{4}{aa})^*)}}^c$
and itself.  It is the trivial identity
$(\lmul{k}{\Ed})^c \Rightarrow \Ed^c$ that eliminates the common factor.

\begin{Example}[\cref{ex:ab} continued]
  \label{ex:ab:xpn}%
  We have (see \cref{ex:ab:xpn:detailed} in \cref{sec:appendix} for
  details):
  \begin{align*}
    d(\Ed_3)
    &= a \odot [\Lmul{2}{b} \oplus \Lmul{3}{\paren{b^{c} \AND (a+b)^*}}]
      \oplus b \odot [\Lmul{3}{(a+b)^*}]
  \end{align*}%
  \label{ex:ab:aut}%
  The lower part of $\mathcal{A}_{\Ed_3}$ is characteristic of the
  complement of a complete deterministic automaton:
  \\[-1ex]
  \centerline{\includegraphics[scale=.8]{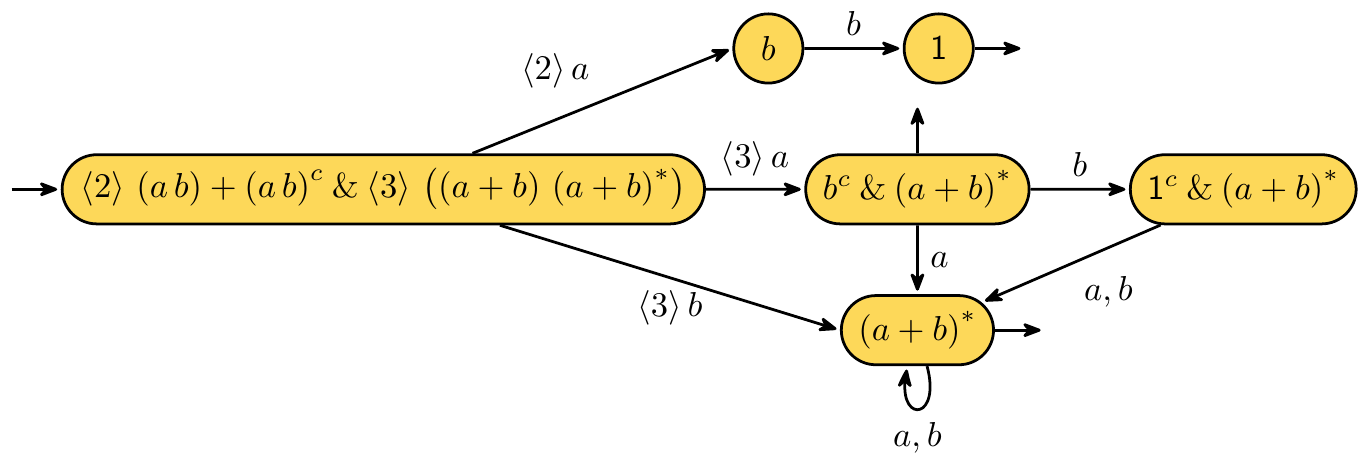}}
\end{Example}
\vspace{-10mm}

\subsection{Complexity and Performances}
\label{sec:perfs}
We focus on basic expressions.  Obviously, $\width{\Ed} \le \length{\Ed}$,
and we know $\length{\Ac_\Ed} \le \width{\Ed} + 1$.

The complexity of Antimirov's algorithm is $O(\width{E}^3\length{E}^2)$
\citep{champarnaud.2002.tcs}: for each of the $\length{\Ac_\Ed}$ states, we
may generate at most $\length{\Ac_\Ed}$ partial derivatives, each one to
compare to the $\length{\Ac_\Ed}$ derived-terms.  That's
$O(\length{\Ac_\Ed}^3)$ comparisons to perform on objects of size
$O(\length{\Ed}^2)$.

However, hash tables allow to avoid these costly comparisons.  For each of
the $\length{\Ac_\Ed}$ states, we may generate at most $\length{\Ac_\Ed}$
partial derivatives and number them via a hash table.  Computing an
expansion builds an object of size $O(\length{\Ed}^2)$, however using
references instead of deep copies allows to stay linear, so the complexity
is $O(\width{E}^2\length{E})$.

\smallskip
To build the derived-term automaton using derivation, one loops over the
alphabet for each derived term.  This incurs a performance penalty with
large alphabets.
%
%
The following table reports the duration of the process, in milliseconds,
for $\Ed_n \coloneqq (a+b)^*a(a+b)^n$ (right associative) by
\vcsn\footnote{\vcsn 2.2 as of 2016-01-29, compiled with Clang 3.6 with
  options \texttt{-O3 -DNDEBUG}, and run on a Mac OS X 10.11.3, Intel Core
  i7 2.9GHz, 8GB of RAM.  Best run out of five.}, depending on $n$, for two
alphabet sizes: 2 and 254 (\vcsn reserves two \texttt{chars}).

\smallskip
\noindent
\newcommand{\T}[1]{\multicolumn{1}{c}{~~~~~#1~~~}}
\begin{tabular}{l@{~~~}d{2}d{2}d{1}d{1}d{0}d{0}d{0}}
  {}             & \T{5} & \T{10} & \T{50} & \T{100} & \T{500} & \T{1000} & \T{5000} \\
  \cmidrule{2-8}
  derivation 2   & 0.08  & 0.12   & 0.80   & 2.5     & 55      & 210      & 4,735    \\
  derivation 254 & 1.12  & 2.15   & 15.56  & 39.2    & 694     & 2,448    & 59,019   \\
  expansion 2    & 0.08  & 0.10   & 0.55   & 1.2     & 20      & 70       & 1,617    \\
  expansion 254  & 0.08  & 0.11   & 0.49   & 1.2     & 19      & 70       & 1,619    \\
\end{tabular}
\vspace*{.2\baselineskip}

Even on a two-letter alphabet, the expansion-based algorithm performs better
than the derivation-based one. (To put things in perspective, the
construction of the standard automaton for $n = 5000$ takes $8.2$ms.)

One can optimize the derivation-based algorithm by computing the firsts
globally \citep{owens.2009.jfp} or locally, on-the-fly, and then derivating
on this set.  However, on sums such as $a_1+\cdots+a_n$ (where $a_i$ are
distinct letters) the expansion requires a single traversal ($O(n)$) whereas
one still needs $n$ derivations, a $O(n^2)$ process.
Besides, the derivation-based algorithm computes the constant term of an
expression several times: to check whether the current state is final, to
compute the derivation of products and stars, and to compute the firsts of
products.  To fix this issue, these repeated computations can be cached.

Addressing both concerns (iteration over the alphabet, repeated computation
of the constant term) for the derivation-based algorithm requires three
tightly entangled algorithms (constant term, derivation, first).
Expansions, on the other hand, keep them together, in a single construct,
computed in a single traversal of the expression.


\section{Related Work}
\label{sec:related}
Compared to \citet{brzozowski.64.jacm} we introduced \emph{weighted}
expansions, and their direct computation, making them the core computation
of the algorithm.  This was partly done for basic Boolean expressions by
\citet{antimirov.1996.tcs} as ``linear forms''.

Aside from our support for weighted expressions, our approach of extended
operators is comparable to that of \citet{caron.2011.lata.2}, but, we
believe, using a simpler framework.  Basically, their sets of sets of
expressions correspond to polynomials of conjunctions: their
$\{\{\Ed, \Fd\}, \{\Gd, \Hd\}\}$ is our $\Ed \AND \Fd \oplus \Gd \AND \Hd$.
Using our framework, the automaton of Fig.~3 \citep{caron.2011.lata.2} has
one state less, since $\{\Ed, \Fd\}$ and $\{\Ed \cap \Fd\}$ both are
$\Ed \AND \Fd$.  Actually, the main point of sets of sets of expressions is
captured by our \emph{distributive} definition of the conjunction of
polynomials, \cref{eq:poly:ops:andcompl}, which matches that of their
\raisebox{1pt}{\scalebox{.7}{\circled{$\cap$}}} operator%
\begin{longenv}
  ; indeed what they call the ``natural extension''
  \citep[Sect.~3.1]{caron.2011.lata.2} would correspond to
  $\Pd_1 \AND \Pd_2 \coloneqq \expr{\Pd_1} \AND \expr{\Pd_2}$
\end{longenv}
.  Additional properties, such as associativity of $\AND$, can be enabled
via additional trivial identities.  Like us, their
\raisebox{1pt}{\scalebox{.7}{\circled{$\neg$}}} operator ensures that
complemented expressions generate deterministic automata.

For basic (weighted) expressions, completely different approaches build the
derived-term automaton with a quadratic complexity
\citep{allauzen.2006.mfcs.2, champarnaud.2007.dlt.2}.  However, the
expansion-based algorithm features some unique properties.  It supports a
simple and natural on-the-fly implementation.  It provides insight on the
built automata by labeling states with the language/series they denote
(e.g., \vcsn renders derived-term automata as in \ifthenelse{\boolean{long}}
{\cref{ex:e1:end,ex:e1:det,ex:e2:aut,ex:ab:aut}}
{\cref{ex:e1:end,ex:e1:det,ex:ab:aut}}).  It is a flexible framework in
which new operators can be easily supported (e.g., the shuffle and
infiltration operators in \vcsn).  It supports the direct construction of
deterministic automata.  And it copes easily with alternative derivation
schemes, such as the ``broken derived-terms'' \citep{lombardy.04.latin,
  lombardy.2005.tcs, lombardy.2010.rairo, angrand.2010.jalc}.

\section{Conclusion}
\label{sec:conc}
The construction of the derived-term automaton from a weighted rational
expression is a powerful technique: states have a natural interpretation
(they are identified by their future: the series they compute), extended
rational expressions are easily supported, determinism can be requested, and
it even offers a natural lazy, on-the-fly, implementation to handle infinite
automata.

To build the derived-term automaton, we generalized Brzozowski's expansions
to weighted expressions, and an inductive algorithm to compute the expansion
of a rational expression.  The formulas on which this algorithm is built
reunite as a unique entity three facets that were kept separated in previous
works: constant term, firsts, and derivatives.  This results in a simpler
set of equations, and an implementation whose complexity is independent of
the size of the alphabet and even applies when it is infinite (e.g., when
labels are strings, integers, etc.).  Building the derived-term automaton
using expansions is straightforward.  Derivatives are only a technical tool
to prove the correctness of the derived-terms.  We have also shown that
using proper techniques, the complexity of the algorithm is much better that
previously reported.

The computation of expansions and derivations are implemented in
\vcsn{}\cref{foot:url}, together with their automaton construction
procedures (possibly lazy, possibly deterministic).  Our implementation
actually prototypes support for additional operators on rational expressions
(e.g., shuffle and infiltration).  Our future work is focused on these
operators.

\subparagraph*{Acknowledgments} Interactions with A.~Duret-Lutz,
S.~Lombardy, L.~Saiu and J.~Sakarovitch resulted in this work.  Anonymous
reviewers made very helpful comments.

\bibliographystyle{myabbrvnat}
\bibliography{%
  article,%
  share/bib/acronyms,%
  share/bib/lrde,%
  share/bib/comp.lang.c++,%
  share/bib/comp.compilers.automata%
}

\appendix
\section{Appendix}
\label{sec:appendix}

\begin{proof}[Proof of \cref{lem:xpn:semantics}]
  Most operators are trivial, we focus here on the extended operators.
  \begin{align*}
    \sem{\Xd \AND \Yd}
    & = \sem{\bra{\Xd_\eword\Yd_\eword}
      \oplus
      \bigoplus_{\mathclap{a \in f(\Xd) \cap f(\Yd)}} a \odot [\Xd_a \AND \Yd_a]}
    & \text{by definition, \cref{eq:epn:and:epn}}
    \\
    & = \Xd_\eword\Yd_\eword
      + \sum_{\mathclap{a \in f(\Xd) \cap f(\Yd)}} a \cdot \sem{\Xd_a \AND \Yd_a}
    & \text{by definition of $\mathsf{expr}$}
    \\
    & = \Xd_\eword\Yd_\eword
      + \sum_{\mathclap{a \in f(\Xd) \cap f(\Yd)}} a \cdot \left(\sem{\Xd_a} \AND \sem{\Yd_a}\right)
    & \text{by \cref{lem:poly:ops}}
    \\
    & = \paren{\Xd_\eword + \sum_{\mathclap{a \in f(\Xd)}} a \cdot \sem{\Xd_a}}
      \AND
      \left(\Yd_\eword + \sum_{\mathclap{a \in f(\Yd)}} a \cdot \sem{\Yd_a}\right)
    & \text{by \cref{eq:series:and:zip}}
    \\
    & = \sem{\bra{\Xd_\eword} \oplus \bigoplus_{\mathclap{a \in f(\Xd)}} a \cdot [\Xd_a]}
      \AND
      \sem{\bra{\Yd_\eword} \oplus \bigoplus_{\mathclap{a \in f(\Yd)}} a \cdot [\Yd_a]}
    \\
    & = \sem{\Xd} \AND \sem{\Yd}
  \end{align*}

  \begin{align*}
    \sem{\Xd^{c}}
    & = \sem{\bra{\Xd_\eword^{c}}
      \oplus \bigoplus_{a \in f(\Xd)} a \odot [\Xd_a^{c}] \oplus \bigoplus_{a \in A\setminus f(\Xd)} a \odot [\zed^{c}]}
    & \text{by definition, \cref{eq:epn:compl:epn}}
    \\
    & = \Xd_\eword^{c}
      + \sum_{a \in f(\Xd)} a \cdot \sem{\Xd_a^{c}} + \sum_{a \in A\setminus f(\Xd)} a \cdot \sem{\zed^{c}}
    \\
    & = \Xd_\eword^{c}
      + \sum_{a \in f(\Xd)} a \cdot \sem{\Xd_a}^{c} + \sum_{a \in A\setminus f(\Xd)} a \cdot \sem{\zed}^{c}
    & \text{by \cref{lem:poly:ops}}
    \\
    & = \paren{\Xd_\eword + \sum_{a \in f(\Xd)} a \cdot \sem{\Xd_a}}^{c}
    & \text{by \cref{eq:series:compl}}
    \\
    & = \sem{\Xd}^{c}
    && \hfill\qedhere
  \end{align*}
\end{proof}

\begin{Example}[\cref{ex:ab:xpn} detailed]
  \label{ex:ab:xpn:detailed}
  We have:
  \begin{align*}
    \xpa{(ab)^{c}}
     = \xpa{ab}^{c}
     = \paren{\Lmul{a}{[b]}}^{c}
     = \bra{\unK} \oplus a \odot [b^{c}] \oplus b \odot [\zed^{c}]
  \end{align*}
  \begin{align*}
    \xpa{\lmul{3}{(a+b)(a+b)^*}}
    & = \lmul{3}{\xpa{(a+b)(a+b)^*}}\\
    & = \lmul{3}{\xpap{a+b}\cdot(a+b)^* \oplus \bra{\xpae{a+b}}\xpa{(a+b)^*}} \\
    & = \lmul{3}{\paren{a \odot [\Lmul{1}{\und}] \oplus b \odot [\Lmul{1}{\und}]}\cdot(a+b)^* \oplus \bra{\zeK}\xpa{(a+b)^*}} \\
    & = \lmul{3}{\paren{a \odot [\Lmul{1}{(a+b)^*}] \oplus b \odot [\Lmul{1}{(a+b)^*}]}} \\
    & = a \odot [\Lmul{3}{(a+b)^*}] \oplus b \odot [\Lmul{3}{(a+b)^*}]
  \end{align*}
  therefore:
  \begin{align*}
    \Xd
    & \coloneqq \xpa{(ab)^{c}} \AND \xpa{\lmul{3}{(a+b)(a+b)^*}}\\
    & = \paren{\bra{\unK} \oplus a \odot {[b^{c}] \oplus b \odot [\zed^{c}]}}
      \AND \paren{\Lmul{a}{[\Lmul{3}{(a+b)^*}]} \oplus \Lmul{b}{[\Lmul{3}{(a+b)^*}]}}\\
    & = a \odot [\Lmul{3}{\paren{b^{c} \AND (a+b)^*}}]
      \oplus b \odot [\Lmul{3}{\paren{\zed^{c} \AND (a+b)^*}}]
    \\
    & = a \odot [\Lmul{3}{\paren{b^{c} \AND (a+b)^*}}]
      \oplus b \odot [\Lmul{3}{(a+b)^*}]
  \end{align*}
  and finally
  \begin{align*}
    d(\Ed_3)
    & = \xpa{\lmul{2}{ab} + (ab)^{c} \AND \lmul{3}{(a+b)(a+b)^*}} \\
    & = \xpa{\lmul{2}{ab}} \oplus \xpa{(ab)^{c} \AND \lmul{3}{(a+b)(a+b)^*}}\\
    & = a \odot [\Lmul{2}{b}] \oplus \overbrace{\paren{\xpa{(ab)^{c}} \AND \xpa{\lmul{3}{(a+b)(a+b)^*}}}}^{\Xd} \\
    & = a \odot [\Lmul{2}{b}]
      \oplus a \odot [\Lmul{3}{\paren{b^{c} \AND (a+b)^*}}]
      \oplus b \odot [\Lmul{3}{(a+b)^*}]\\
    & = a \odot [\Lmul{2}{b} \oplus \Lmul{3}{\paren{b^{c} \AND (a+b)^*}}]
      \oplus b \odot [\Lmul{3}{(a+b)^*}]
  \end{align*}
\end{Example}

\section{Appendix: Proof of \cref{thm:expaton}}
\label{app:proof:expaton}

Proving this theorem requires several auxiliary results.  None of them is
needed in an implementation: \cref{def:expa-of-expr} is all that is needed
to build the derived-term automaton.

The path, paved by \citet{lombardy.2005.tcs}, is as follows.  First, define
derivation with respect to a word, and show that it is a syntactic
``implementation'' of left-quotient of a series by a word
(\cref{sec:der:word}).  Then define (syntactically) the set of derived
terms, and show that they generate all the word derivatives
(\cref{sec:der-terms}).  Finally show that computations in the derived-term
automaton correspond to computing the left-quotient of the denoted series
(\cref{sec:dt-aut}).

This is also the path followed by the rather terse proof of
\citet[Proposition~4]{caron.2011.lata.2}, but filling the gaps.

\subsection{Derivation by Words}
\label{sec:der:word}
\begin{Definition}[Derivation of a Polynomial]
  $\da{\oplus_{i \in I} \Lmul{k_i}{\Ed_i}}
  \coloneqq
  \oplus_{i \in I} \Lmul{k_i}{\da{\Ed_i}}$
\end{Definition}

\begin{Lemma}\label{lem:der:poly}
  \abovedisplayskip=\abovedisplayshortskip
  \begin{align}
    \label{eq:der:poly:and}
    \da{\paren{\Pd \AND \Qd}} &= \da{\Pd} \AND \da{\Qd} \\
    \label{eq:der:poly:expr}
    \da{\expr{\Pd}} &= \expr{\da{\Pd}}\\
    \label{eq:der:poly:compl}
    \da{\paren{\Pd^c}} &= \paren{\da{\Pd}}^c
  \end{align}
\end{Lemma}

\begin{proof}
  Let
  $\Pd \coloneqq \bigoplus_{i\in I}\bra{k_i} \odot \Ed_i, \Qd \coloneqq
  \bigoplus_{j\in J}\bra{h_j} \odot \Fd_j$.
  \begin{align*}
    \da{\paren{\Pd \AND \Qd}}
    &= \da{\paren{\bigoplus_{i\in I, j\in J} \Lmul{k_ih_j}{\Ed_i \AND \Fd_j}}}  & \text{by def. of polynomial conjunction}\\
    &= \bigoplus_{i\in I, j\in J} \Lmul{k_ih_j}{\da{\paren{\Ed_i \AND \Fd_j}}}\\
    &= \bigoplus_{i\in I, j\in J} \Lmul{k_ih_j}{\paren{\da{\Ed_i} \AND \da{\Fd_j}}} & \text{by \cref{eq:der:and}}\\
    &= \da{\Pd} \AND \da{\Qd} & \text{by def. of polynomial conjunction}
  \end{align*}

  \begin{align*}
    \da{\expr{\Pd}}
    &= \da{\expr{\bigoplus_{i\in I} \Lmul{k_i}{\Ed_i}}}\\
    &= \da{\sum_{i\in I} \lmul{k_i}{\Ed_i}}\\
    &= \sum_{i\in I} \lmul{k_i}{\da{\Ed_i}}\\
    &= \expr{\bigoplus_{i\in I} \Lmul{k_i}{\da{\Ed_i}}}\\
    &= \expr{\da{\Pd}}
  \end{align*}

  \begin{align*}
    \da{\paren{\Pd^c}}
    &= \da{\paren{\expr{\Pd}^c}}         & \text{by def. of polynomial complement} \\
    &= \paren{\da{\paren{\expr{\Pd}}}}^c & \text{by \cref{eq:der:compl}} \\
    &= \paren{\expr{\da{\Pd}}}^c         & \text{by \cref{eq:der:poly:expr}} \\
    &= \paren{\da{\Pd}}^c                & \text{by def. of polynomial complement}
    & \hfill\qedhere
  \end{align*}
\end{proof}

Derivation wrt a single-letter word is defined as the derivation wrt that
letter.  Derivation wrt to a longer word is the result of repeated
derivations wrt letters.
\begin{Definition}[Derivation wrt a Word]
  $\forall a \in A, u \in A^+, \derivative{ua}{\Ed} \coloneqq
  \derivative{a}{\derivative{u}{\Ed}}$.
\end{Definition}

\begin{Lemma}
  \label{lem:der:words}
  $\derivative{uv}{\Ed} = \derivative{v}{\derivative{u}{\Ed}}$
\end{Lemma}

Explicit formulas exist for derivation with respect to a word.
\begin{Lemma}[Direct Computations of Derivation wrt a Word]
  \abovedisplayskip=\abovedisplayshortskip
  \label{lem:der:word}
  \begin{align}
    \du{(\Ed + \Fd)} &= \du{\Ed} \oplus \du{\Fd},
    \\
    \du{(\lmul{k}{\Ed})} &= \lmul{k}{\paren{\du{\Ed}}},
    \\
    \du{(\rmul{\Ed}{k})} &= \rmul{\paren{\du{\Ed}}}{k},
    \\
    \du{(\Ed \cdot \Fd)} &= \paren{\du{\Ed}} \cdot \Fd \oplus \paren{\bigoplus_{\substack{f=gh\\g\in A^*, h\in A^+}}c(\dg{\Ed})\dH{\Fd}}
    \\
    \du{\Ed^*}
    &= \bigoplus_{\substack{f=g_1g_2\cdots g_n\\g_1,\ldots, g_n\in A^+}}
      \lmul{\paren{\prod_{i\in [n-1]}c(\Ed)^* c(\derivative{g_i}{\Ed})}c(\Ed)^*}{\derivative{g_n}{\Ed}\cdot\Ed^*}
    \\
    \label{eq:der:word:and}
    \du{(\Ed \AND \Fd)} &= \du{\Ed} \AND \du{\Fd},
    \\
    \label{eq:der:word:compl}
    \du{\Ed^c} &= \paren{\du{\Ed}}^{c}
  \end{align}
\end{Lemma}

\begin{proof}
  The proof is the same as that of \citep[Prop.~3]{lombardy.2005.tcs}, with
  additional cases for conjunction and complement.

  For conjunction:
  \begin{align*}
    \derivative{ua}{\paren{\Ed \AND \Fd}}
    &= \da{\du{\paren{\Ed \AND \Fd}}} \\
    &= \da{\paren{\du{\Ed} \AND \du{\Fd}}} & \text{by induction hypothesis}\\
    &= \da{\du{\Ed}} \AND \da{\du{\Fd}} & \text{by \cref{eq:der:poly:and}} \\
    &= \derivative{ua}{\Ed} \AND \derivative{ua}{\Fd}
  \end{align*}

  For complement:
  \begin{align*}
    \derivative{ua}{\paren{\Ed^c}}
    &= \da{\du{\paren{\Ed^c}}} \\
    &= \da{\paren{\du{\Ed}}^c} & \text{by induction hypothesis} \\
    &= \paren{\da{\paren{\du{\Ed}}}}^c & \text{by \cref{eq:der:poly:compl}} \\
    &= \paren{\derivative{ua}{\Ed}}^ c
    && \hfill\qedhere
  \end{align*}
\end{proof}

The following lemma makes explicit the connection between the (syntactic)
derivation, and the semantics of an expression.

\begin{Lemma}[{\citep[Prop.~4]{lombardy.2005.tcs}}]
  \label{lem:sem:der:word}
  $\forall u \in A^+, \sem{\Ed}(u) = c(\derivative{u}{\Ed})$.
\end{Lemma}

\begin{proof}
  For conjunction:
  \begin{align*}
    \sem{\Ed\AND\Fd}(u)
    &= \paren{\sem{\Ed}\AND\sem{\Fd}}(u) & \text{by definition} \\
    &= \sem{\Ed}(u) \cdot \sem{\Fd}(u)   & \text{by definition} \\
    &= c(\du{\Ed}) \cdot c(\du{\Fd})     & \text{by induction hypothesis} \\
    &= c(\du{\Ed} \AND \du{\Fd})         & \text{by \cref{eq:der:and}} \\
    &= c(\du{\paren{\Ed \AND \Fd}})      & \text{by \cref{eq:der:word:and}}
  \end{align*}

  For complement:
  \begin{align*}
    \sem{\Ed^c}(u)
    &= \sem{\Ed}^c(u)    & \text{by definition} \\
    &= (\sem{\Ed}(u))^c  & \text{by definition} \\
    &= (c(\du{\Ed}))^c   & \text{by induction hypothesis} \\
    &= c((\du{\Ed})^c)   & \text{by \cref{eq:der:compl}} \\
    &= c(\du{(\Ed^c)})   & \text{by \cref{eq:der:word:compl}}
    & \hfill\qedhere
  \end{align*}
\end{proof}

The previous lemma allows to show the connection between the (syntactic)
derivation, and the (semantical) left-quotient of a series.
\begin{theorem}[{\citep[Theorem~1]{lombardy.2005.tcs}}]
  \label{thm:sem:derivative}
  $\forall u \in A^+, \sem{\du{\Ed}} = u^{-1}\sem{\Ed}$.
\end{theorem}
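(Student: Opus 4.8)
The plan is to establish the equality of the two series $\sem{\du{\Ed}}$ and $u^{-1}\sem{\Ed}$ by checking that they assign the same weight to every word. Recall that the left-quotient is the series $u^{-1}\sem{\Ed} \coloneqq w \mapsto \sem{\Ed}(uw)$; hence it suffices to prove, for a fixed $u \in A^+$ and \emph{every} $w \in A^*$, that $\sem{\du{\Ed}}(w) = \sem{\Ed}(uw)$. I would not induct on $u$ at all: the substantive work has already been done in \cref{lem:sem:der:word}, and the argument reduces to feeding words of the right shape into it.

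I would then split on whether $w$ is empty. If $w \in A^+$, then $\sem{\du{\Ed}}(w) = c(\derivative{w}{(\du{\Ed})})$ by \cref{lem:sem:der:word} applied to the expression $\du{\Ed}$. By \cref{lem:der:words} the iterated derivative collapses, $\derivative{w}{(\du{\Ed})} = \derivative{uw}{\Ed}$, so $\sem{\du{\Ed}}(w) = c(\derivative{uw}{\Ed})$; and since $uw \in A^+$, a second application of \cref{lem:sem:der:word} gives $c(\derivative{uw}{\Ed}) = \sem{\Ed}(uw)$, as desired. If $w = \eword$, the value $\sem{\du{\Ed}}(\eword)$ is by definition the constant term of the series $\sem{\du{\Ed}}$; using that the syntactic constant term coincides with that of the denoted series, this equals $c(\du{\Ed})$, and \cref{lem:sem:der:word} turns it into $\sem{\Ed}(u) = \sem{\Ed}(u\eword)$. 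Both cases agree with $\sem{\Ed}(uw)$, so the two series coincide.

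The only genuinely delicate point is the empty-word case, which is precisely why \cref{lem:sem:der:word} is stated over $A^+$ rather than $A^*$: at $\eword$ no derivative is available, and one must instead invoke the bridge $c(\Fd) = \sem{\Fd}(\eword)$ between the syntactic constant term of an expression and the constant term of the series it denotes. I would dispatch this bridge by a routine induction on $\Fd$, matching each clause of \cref{def:ctder} against the corresponding clause in the definition of $\sem{\cdot}$ (for the star clause using \cref{prop:dev}, which is exactly what relates $c(\Ed)^*$ to the constant term of $\sem{\Ed}^*$). Everything else is bookkeeping, since the real content — that derivation implements left-quotient weight-by-weight — is carried entirely by \cref{lem:sem:der:word} together with \cref{lem:der:words}.
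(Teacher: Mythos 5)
Your proof is correct and takes essentially the same route as the paper's: both arguments fix $w \in A^+$, apply \cref{lem:sem:der:word} to $\du{\Ed}$, collapse the iterated derivative via \cref{lem:der:words} into $\derivative{uw}{\Ed}$, and apply \cref{lem:sem:der:word} a second time to obtain $\sem{\Ed}(uw) = (u^{-1}\sem{\Ed})(w)$. The only difference is that you also treat $w = \eword$ explicitly through the bridge $c(\Fd) = \sem{\Fd}(\eword)$, a case the paper's proof leaves implicit, since it only quantifies over $v \in A^+$.
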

\begin{proof}
  For any word $v \in A^+$,
  \begin{align*}
    \sem{\du{\Ed}}(v)
    &= c(\dv{\du{\Ed}})        & \text{by \cref{lem:sem:der:word}} \\
    &= c(\derivative{uv}{\Ed}) & \text{by \cref{lem:der:words}} \\
    &= \sem{\Ed}(uv)           & \text{by \cref{lem:sem:der:word}} \\
    &= (u^{-1}\sem{\Ed})(v)    & \text{by definition of left-quotient}
    & \hfill\qedhere
 \end{align*}
\end{proof}

\subsection{Derived Terms}
\label{sec:der-terms}
\begin{Definition}[Derived Terms]
  \label{def:dts}
  Given an expression $\Ed$, its \dfn{derived terms} is the set $D(\Ed)$
  defined as follows:
  \begin{align*}
    D(\zed) &\coloneqq \emptyset \\
    D(\und) &\coloneqq \emptyset \\
    D(a)    &\coloneqq \{\und\} \ee \forall a \in A\\
    D(\Ed + \Fd) &\coloneqq D(\Ed) \cup D(\Fd) \\
    D(\lmul{k}{\Ed}) &\coloneqq D(\Ed) \ee \forall k \in \K \\
    D(\rmul{\Ed}{k}) &\coloneqq \{\rmul{\Ed_i}{k} \mid \Ed_i \in D(\Ed)\}  \ee \forall k \in \K \\
    D(\Ed\cdot \Fd) &\coloneqq  \{\Ed_i\cdot\Fd \mid \Ed_i \in D(\Ed)\} \cup D(\Fd) \\
    D(\Ed^*) &\coloneqq  \{\Ed_i\cdot\Ed^* \mid \Ed_i \in D(\Ed)\} \\
    D(\Ed\AND \Fd)
    &\coloneqq \{\Ed_i \AND \Fd_j \mid \forall \Ed_i \in D(\Ed), \forall \Fd_j\in D(\Fd)\}
    \\
    D(\Ed^c)
    &\coloneqq \{(\lmul{k_1}{\Ed_1} + \cdots + \lmul{k_n}{\Ed_n})^c \mid \forall k_1, \ldots, k_n \in \K, \forall \Ed_1, \ldots, \Ed_n \in D(\Ed)\}
  \end{align*}
  \noindent where in the last equation, the $\Ed_i$ are sorted.  Besides,
  depending on the features of $\K$, the coefficients may be normalized so
  that colinear combinations are represented only once.  For instance if
  $\K$ has no zero divisor, one may divide by the GCD of the $k_i$ (so that
  the $k_i$ are coprime), or, in the case of a field, by the first non null
  $k_i$ (so that the first non null coefficient is $\unK$), or by the sum of
  the $k_i$ provided it's not null (so that the sum of the coefficients is
  $\unK$), etc.
\end{Definition}

\begin{theorem}
  If $\K$ is finite, or if $\Ed$ has no complement, then $D(\Ed)$ is finite.
\end{theorem}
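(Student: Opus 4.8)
The plan is to proceed by structural induction on $\Ed$, establishing that $D(\Ed)$ is finite. The base cases are immediate, since $D(\zed) = D(\und) = \emptyset$ and $D(a) = \{\und\}$ are finite. For every compositional clause of \cref{def:dts} \emph{except} complement, I would observe that the right-hand side expresses $D(\Ed)$ as a finite union of injective images and Cartesian-product images of the derived-term sets of the immediate subexpressions, which are finite by the induction hypothesis. Concretely, the bounds
\begin{gather*}
  |D(\Ed+\Fd)| \le |D(\Ed)| + |D(\Fd)|, \quad
  |D(\lmul{k}{\Ed})| = |D(\Ed)|, \quad
  |D(\rmul{\Ed}{k})| \le |D(\Ed)|, \\
  |D(\Ed\cdot\Fd)| \le |D(\Ed)| + |D(\Fd)|, \quad
  |D(\Ed^*)| \le |D(\Ed)|, \quad
  |D(\Ed\AND\Fd)| \le |D(\Ed)|\cdot|D(\Fd)|
\end{gather*}
hold and never mention $\K$, so all of these cases preserve finiteness unconditionally.

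The only clause that can destroy finiteness is complement, and this is precisely where the two disjuncts of the hypothesis enter. If $\Ed$ contains no complement operator, then neither does any subexpression, so the induction never reaches the clause for $D(\cdot^c)$ and finiteness follows from the unconditional bounds above. Otherwise I would invoke the assumption that $\K$ is finite. The key observation is that the summand $\lmul{k_1}{\Ed_1} + \cdots + \lmul{k_n}{\Ed_n}$ occurring in the definition of $D(\Ed^c)$ is, once the distinct $\Ed_i$ are sorted and colinear combinations normalized, nothing but a polynomial over $D(\Ed)$ --- that is, a finitely supported map from $D(\Ed)$ into $\K$. Writing $N \coloneqq |D(\Ed)|$ (finite by the induction hypothesis), there are at most $|\K|^{N}$ such maps, hence at most $|\K|^{N}$ distinct complemented expressions, giving $|D(\Ed^c)| \le |\K|^{|D(\Ed)|}$, which is finite whenever $\K$ is finite.

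The main obstacle --- indeed the only delicate point --- is this complement case: the family is written as indexed by an \emph{unbounded} length $n$ and by \emph{arbitrary} weight tuples, so one must argue that it in fact collapses to a finite set. The crux is to take the phrase ``the $\Ed_i$ are sorted'' seriously: combined with the uniqueness of each summand inherent to the polynomial representation, it forces $n \le |D(\Ed)|$ and identifies each admissible summand-list with a single weight function on the finite set $D(\Ed)$, whence the count $|\K|^{|D(\Ed)|}$. Once this identification is made precise, the two situations combine cleanly --- complement-free expressions are covered by the unconditional bounds, and the complement clause is covered by the finiteness of $\K$ --- which closes the induction and completes the proof.
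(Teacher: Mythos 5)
Your proof is correct and takes essentially the same approach as the paper's: a structural induction in which every clause except complement preserves finiteness unconditionally, with the complement clause handled by counting the finitely many weighted combinations over $D(\Ed)$ when $\K$ is finite. The paper compresses this into two sentences (``finiteness propagates during the induction'' and ``there exists a finite number of combinations''), whereas you make the propagation bounds and the count $|D(\Ed^c)| \le |\K|^{|D(\Ed)|}$ explicit, which is exactly the intended argument.
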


\begin{proof}
  This is a direct consequence from \cref{def:dts}: finiteness propagates
  during the induction.  The only danger is the case of complement, whose
  finiteness ensues from a very crude criterion: there exists a finite
  number of combinations.
\end{proof}

We prove that the set of derived terms is closed by derivation.  The
insightful reader can see automata dawning: the derived terms are the
states, and the coefficients are the weights of the transitions.
\begin{Lemma}
  \label{lem:dt:closed:letter}
  We denote $\{1, \ldots, n\}$ by $[n]$.

  Let $\Ed$ be an expression, $D(\Ed) = \{\Ed_i \mid i \in [n]\}$ be its
  derived terms.  There exists $n$ coefficients $(k_i^{(a)})_{i\in[n]}$ and
  $n^2$ coefficients $(k_{i,j}^{(a)})_{i,j\in[n]}$ such that
  \begin{gather*}
    \da{\Ed}   = \bigoplus_{i\in[n]} \lmul{k_i^{(a)}}{\Ed_i}
    \qquad
    \da{\Ed_i} = \bigoplus_{i'\in[n]} \lmul{k_{i,i'}^{(a)}}{\Ed_{i'}}
  \end{gather*}
\end{Lemma}

\begin{proof}
  We follow \citep[proof of Theorem~2]{lombardy.2005.tcs}, to which we add
  the following cases.  We note:
  \begin{gather*}
    D(\Fd)     = \{F_j \mid j \in [m]\} \qquad
    \da{\Fd}   = \bigoplus_{j\in[m]}  \lmul{h_j^{(a)}}{\Fd_j} \qquad
    \da{\Fd_j} = \bigoplus_{j'\in[m]} \lmul{h_{j,j'}^{(a)}}{\Fd_{j'}}
  \end{gather*}

  Consider $\Ed \AND \Fd$:
  \begin{align*}
    \da{(\Ed\AND\Fd)}
    &= \da{\Ed} \AND \da{\Fd} \\
    &= \paren{\bigoplus_{i\in[n]} \lmul{k_i^{(a)}}{\Ed_i}}
      \AND \paren{\bigoplus_{j\in[m]} \lmul{h_j^{(a)}}{\Fd_j}} \\
    &= \bigoplus_{i\in[n], j\in[m]}
      \lmul{k_i^{(a)}h_j^{(a)}}{\paren{\Ed_i\AND\Fd_j}}
  \end{align*}
  which is indeed a linear combination of derived terms of $\Ed\AND\Fd$,
  since
  $D(\Ed \AND \Fd) = \{\Ed_i\AND\Fd_j \mid \forall \Ed_i \in D(\Ed), \forall
  \Fd_j\in D(\Fd)\}$ by definition \cref{def:dts}.

  Likewise,
  \begin{align*}
    \da{(\Ed_i\AND\Fd_j)}
    &= \da{\Ed_i} \AND \da{\Fd_j} \\
    &= \paren{\bigoplus_{i'\in[n]} \lmul{k_{i,i'}^{(a)}}{\Ed_{i'}}}
      \AND \paren{\bigoplus_{j'\in[m]} \lmul{h_{j,j'}^{(a)}}{\Fd_{j'}}} \\
    &= \bigoplus_{i'\in[n], j'\in[m]}
      \lmul{k_{i,i'}^{(a)}h_{j,j'}^{(a)}}{\paren{\Ed_{i'}\AND\Fd_{j'}}}
  \end{align*}
  is a linear combination of elements of $D(\Ed\AND\Fd)$.

  Consider $\Ed^c$:
  \begin{align*}
    \da{(\Ed^c)}
    &= (\da{\Ed})^c \\
    &= \paren{\bigoplus_{i\in[n]} \lmul{k_i^{(a)}}{\Ed_i}}^c \\
    &= \paren{\expr{\bigoplus_{i\in[n]} \lmul{k_i^{(a)}}{\Ed_i}}}^c \\
    &= \paren{\sum_{i\in[n]} \lmul{k_i^{(a)}}{\Ed_i}}^c
  \end{align*}
  which is a member of $D(\Ed^c)$.  Note in this case, we expect the $\Ed_i$
  to be sorted in the same order as the one used by $\mathsf{expr}$.

  Besides:
  \begin{align*}
    \da{\paren{\paren{\sum_{i\in[n]} \lmul{k_i^{(a)}}{\Ed_i}}^c}}
    &= \paren{\da{\paren{\sum_{i\in[n]} \lmul{k_i^{(a)}}{\Ed_i}}}}^c \\
    &= \paren{\bigoplus_{i\in[n]} \lmul{k_i^{(a)}}{\da{\Ed_i}}}^c\\
    &= \paren{\bigoplus_{i\in[n]} \lmul{k_i^{(a)}}{\bigoplus_{i'\in[n]} \lmul{k_{i,i'}^{(a)}}{\Ed_{i'}}}}^c\\
    &= \paren{\bigoplus_{i,i'\in[n]} \lmul{k_i^{(a)}k_{i,i'}^{(a)}}{\Ed_{i'}}}^c\\
    &= \paren{\sum_{i,i'\in[n]} \lmul{k_i^{(a)}k_{i,i'}^{(a)}}{\Ed_{i'}}}^c
  \end{align*}
  which is a member of $D(\Ed^c)$.
\end{proof}

The following result, similar to \citep[Theorem~3]{lombardy.2005.tcs}, shows
that any word derivative of an expression is a linear combination of its
derived terms.
\begin{theorem}
  \label{thm:k}
  Let $\Ed$ be an expression, $D(\Ed) = \{\Ed_i \mid i \in [n]\}$ be its
  derived terms, and $u \in A^+$ any word.  There exist coefficients
  $(k_i^{(u)})_{i\in[n]}$ in $\K$ such that:
  \begin{align*}
    \du{\Ed}   &= \bigoplus_{i\in[n]} \lmul{k_i^{(u)}}{\Ed_i}
  \end{align*}
\end{theorem}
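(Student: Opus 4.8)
The plan is to prove the statement by induction on the length of the word $u \in A^+$, leaning on the fact, established in \cref{lem:dt:closed:letter}, that the finite set $D(\Ed)$ is closed under letter-derivation. The base case $\length{u} = 1$ (i.e., $u = a$ for some $a \in A$) is precisely the first equation of \cref{lem:dt:closed:letter}, which directly provides coefficients $(k_i^{(a)})_{i\in[n]}$ with $\da{\Ed} = \bigoplus_{i\in[n]} \lmul{k_i^{(a)}}{\Ed_i}$.

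For the inductive step I would write $u = va$ with $v \in A^+$ and $a \in A$, and unfold the definition of derivation with respect to a word to get $\du{\Ed} = \da{\dv{\Ed}}$. The induction hypothesis supplies coefficients $(k_i^{(v)})_{i\in[n]}$ such that $\dv{\Ed} = \bigoplus_{i\in[n]} \lmul{k_i^{(v)}}{\Ed_i}$. Applying the (linear) definition of derivation of a polynomial then gives $\du{\Ed} = \da{\bigoplus_{i\in[n]} \lmul{k_i^{(v)}}{\Ed_i}} = \bigoplus_{i\in[n]} \lmul{k_i^{(v)}}{\paren{\da{\Ed_i}}}$. Here is where the second equation of \cref{lem:dt:closed:letter} enters: each $\da{\Ed_i} = \bigoplus_{i'\in[n]} \lmul{k_{i,i'}^{(a)}}{\Ed_{i'}}$ is again a linear combination of the \emph{same} derived terms $\Ed_{i'}$, so no new expressions appear.

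Substituting and collapsing nested left-multiplications via the trivial identity $\lmul{k}{\lmul{h}{\Ed}} \Rightarrow \lmul{kh}{\Ed}$, I would obtain
\[
  \du{\Ed}
  = \bigoplus_{i,i'\in[n]} \lmul{k_i^{(v)}k_{i,i'}^{(a)}}{\Ed_{i'}}
  = \bigoplus_{i'\in[n]} \lmul{\paren{\sum_{i\in[n]} k_i^{(v)}k_{i,i'}^{(a)}}}{\Ed_{i'}},
\]
so the desired coefficients are $k_{i'}^{(u)} \coloneqq \sum_{i\in[n]} k_i^{(v)} k_{i,i'}^{(a)}$, closing the induction. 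The argument is essentially routine; the only points requiring care are keeping the (possibly non-commutative) weights in the correct left-to-right order when merging the two left-multiplications, and observing that the gathering of terms by the index $i'$ is exactly what keeps the derivative expressed over the fixed, finite index set $[n]$ rather than growing with $\length{u}$ --- this closure is precisely the role played by \cref{lem:dt:closed:letter}.
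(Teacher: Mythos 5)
Your proposal is correct and follows essentially the same route as the paper: induction on the length of $u$, peeling off the last letter as per the definition of word derivation, with both equations of \cref{lem:dt:closed:letter} supplying the base case and the closure of $D(\Ed)$ under letter-derivation, and the coefficients recombined as $k_{i'}^{(ua)} = \sum_{i} k_i^{(u)} k_{i,i'}^{(a)}$, which is exactly the paper's \cref{eq:word:ki}.
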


\begin{proof}
  The result is proved by induction.

  The base case is established by \cref{lem:dt:closed:letter}.
  \begin{align*}
    \derivative{ua}{\Ed}
    &= \da{\du{\Ed}} \\
    &= \da{\paren{\bigoplus_{i\in[n]} \lmul{k_i^{(u)}}{\Ed_i}}}
    & \text{by induction hypothesis}\\
    &= \bigoplus_{i\in[n]} \lmul{k_i^{(u)}}{\da{\Ed_i}} \\
    &= \bigoplus_{i\in[n]} \lmul{k_i^{(u)}}{\paren{\bigoplus_{j\in[n]} \lmul{k_{i,j}^{(a)}}{\Ed_{j}}}}
    & \text{by \cref{lem:dt:closed:letter}} \\
    &= \bigoplus_{j\in[n]} \paren{\bigoplus_{i\in[n]} \lmul{k_i^{(u)}k_{i,j}^{(a)}}{\Ed_{j}}}\\
    &= \bigoplus_{j\in[n]} \lmul{\sum_{i\in[n]} k_i^{(u)}k_{i,j}^{(a)}}{\Ed_{j}}
  \end{align*}
  \noindent
  i.e.,
  \begin{align}
    \label{eq:word:ki}
    k_{j}^{(ua)} = \sum_{i\in[n]} k_i^{(u)}k_{i,j}^{(a)}
  \end{align}
\end{proof}

\subsection{Derived-term Automaton}
\label{sec:dt-aut}
In order to prove the final result, we express automata in a different way
\citep[Sect.~5]{lombardy.2005.tcs}.

\begin{Definition}[Representations of a Finite Weighted Automaton]
  The \dfn{matrix representation} of a (finite weighted) automaton is the
  sextuplet $\bra{A, \K, E, Q, E, I, T}$ where:
  \begin{itemize}
  \item $A$ is an alphabet
  \item $\K$ (the set of weights) is a semiring,
  \item $Q$ is a finite set of states,
  \item $I$ (resp. $T$) is a row (resp. column) vector of dimension $Q$ with
    entries in $\K$,
  \item $E$ is a square matrix whose entries are linear combinations of
    letters of $A$ with coefficients in $\K$.
  \end{itemize}

  \medskip

  The \emph{$\K$-representation} of an automaton is the triple
  $\bra{I, \zeta, T}$ where $\zeta$ is a morphism from $A$ to
  $\K^{Q \times Q}$ such that $E = \sum_{a\in A}\zeta(a)a$.
\end{Definition}

One can then prove that, for every word $u\in A^*$:
\begin{gather*}
  \sem{\Ac}(u)
  = (I \cdot E^* \cdot T)(u)
  = (I \cdot E^{\length{u}} \cdot T)(u)
  = I \cdot \zeta(u) \cdot T
\end{gather*}

Put together, the definition of derivation and constant terms
(\cref{def:ctder}), their connection with expansions (\cref{prop:expa:der}),
the definition of $\Ac_\Ed$, the expansion-based derived-term automaton of
$\Ed$ (\cref{def:expaton}), and finally \cref{lem:dt:closed:letter}, show
that $\Ac_\Ed$ admits the following $\K$-representation:
\begin{gather*}
  I_{\Ed_i}
  =
    \begin{cases}
      \unK & \text{if $\Ed_i = \Ed$} \\
      \zeK & \text{otherwise}
    \end{cases}
  \qquad
  \zeta(a)_{i,j} = k_{i, j}^{(a)}
  \qquad
  T_{\Ed_i}
   = c(\Ed_i)
\end{gather*}

\noindent
where the coefficients $k_{i, j}^{(a)}$ were defined in
\cref{lem:dt:closed:letter}.  The $\Ed_i$ are the derived-terms of $\Ed$, to
which we add $\Ed_0 \coloneqq \Ed$ if $\Ed \not\in D(\Ed)$, in which case
$k_{i,0}^{(a)} \coloneqq \zeK$, and $k_{0,i}^{(a)} \coloneqq k_i^{(a)}$ for
all $i > 0$.

We prove by induction that:
\begin{gather}
  \label{eq:i-zeta}
  \forall u \in A^+, \forall i \in [n],  (I\cdot \zeta(u))_i = k_i^{(u)}
\end{gather}

\begin{proof}
  The base case:
  \begin{align*}
    (I \cdot \zeta(a))_i
    &= \sum_j (I_j \cdot \zeta(a)_{j,i} \\
    &= \unK \cdot \zeta(a)_{0,i} & \text{by definition of $I$} \\
    &= k_{0,i}^{(a)} & \text{by definition of $\zeta$} \\
    &= k_i^{(a)} & \text{by definition of $k_{i,j}^{(a)}$}
  \end{align*}

  Then the induction:
  \begin{align*}
    (I\cdot \zeta(ua))_i
    &= (I\cdot (\zeta(u)\cdot\zeta(a)))_i \\
    &= ((I\cdot \zeta(u))\cdot\zeta(a))_i \\
    &= \sum_{j}(I\cdot \zeta(u))_j \cdot \zeta(a)_{j,i}) \\
    &= \sum_{j}(k_j^{(u)} \cdot \zeta(a)_j)_i & \text{by induction hypothesis}\\
    &= \sum_{j}(k_j^{(u)} \cdot k^{(a)}_{j,i}) & \text{by definition of $\zeta$}\\
    &= k_i^{(ua)} & \text{by \cref{eq:word:ki}}
    & \hfill\qedhere
  \end{align*}
\end{proof}

We can now finally prove that $\sem{\Ac_\Ed} = \sem{\Ed}$.  Let $u \in A^+$:
\begin{align*}
  \sem{\Ac_\Ed}(u)
  &= (I \cdot \zeta(u) \cdot T) \\
  &= \sum_i (I \cdot \zeta(u))_i \cdot T_i \\
  &= \sum_i k_i^{(u)} \cdot T_i & \text{by \cref{eq:i-zeta}} \\
  &= \sum_i k_i^{(u)} \cdot c(\Ed_i) & \text{by definition of $T$} \\
  &= c\paren{\bigoplus_i \lmul{k_i^{(u)}}{\Ed_i}}\\
  &= c\paren{\du{\Ed}} & \text{by \cref{thm:k}}\\
  &= \sem{\Ed}(u) & \text{by \cref{thm:sem:derivative}} \\
\end{align*}

The case of the empty word follows from the definition of $I$ and $T$:
$\sem{\Ac_\Ed}(\eword) = \sum_i (I_i . T_i) = \unK \cdot T_0 = c(\Ed)$.
\end{document}

